\newcommand{\argmax}{\operatornamewithlimits{argmax}}
\newcommand{\argmin}{\operatornamewithlimits{argmin}}
\renewcommand{\Re}{{\mathbb{R}}}
\renewcommand{\Pr}{{\mathbb{P}}}
\newcommand{\Ex}{{\mathbb{E}}}
\newcommand{\tr}{{\textup{trace}}}
\newcommand{\PM}{{\mathcal{P}}}
\newcommand{\DSM}{{\mathcal{D}}}
\newcommand{\AM}{{\mathcal{A}}}
\newcommand{\ErdosRenyi}{{Erd{\H o}s-R{\'e}nyi}}
\newcommand{\CER}{{\mathrm{CorrER}}}
\newtheorem{theorem}{Theorem}
\newtheorem{proposition}[theorem]{Proposition}
\newtheorem{lemma}[theorem]{Lemma}
\newtheorem{corollary}[theorem]{Corollary}
\theoremstyle{definition}
\newtheorem{definition}[theorem]{Definition}
\theoremstyle{remark}
\newtheorem{remark}[theorem]{Remark}
\begin{document}

 

 
 \title{Tractable Graph Matching via Soft Seeding}
 \author{Fei Fang \thanks{Department of Decision Sciences, Duke University, Durham, NC (\href{mailto:ff33@duke.edu}{ff33@duke.edu}).},
 Daniel L. Sussman\thanks{Department of Mathematics and Statistics, Boston University, Boston, MA (\href{mailto:sussman@bu.edu}{sussman@bu.edu}).}, and
 Vince Lyzinski\thanks{Department of Mathematics, University of Massachusetts, Amherst, MA (\href{mailto:vlyzinski@umass.edu}{vlyzinski@umass.edu}).}}


\maketitle

\abstract{
The graph matching problem aims to discover a latent correspondence between the vertex sets of two observed graphs.
This problem has proven to be quite challenging, with few satisfying methods that are computationally tractable and widely applicable.
The FAQ algorithm \cite{FAQ} has proven to have good performance on benchmark problems and works with a indefinite relaxation of the problem.
Due to the indefinite relaxation, FAQ is not guaranteed to find the global maximum.
However, if prior information is known about the true correspondence, this can be leveraged to initialize the algorithm near the truth.
We show that given certain properties of this initial matrix, with high probability the FAQ algorithm will converge in two steps to the truth under a flexible model for pairs of random graphs.
Importantly, this result implies that there will be no local optima near the global optima, providing a method to assess performance.
 }

\section{Introduction}

Tools to analyze multiple graphs together and to study the relationship between graphs are emerging rapidly \cite{Kivela2014-ij,Durante2018-pl,Levin2017-uk}.
While numerous methods can be employed which essentially treat the graphs as unlabeled and employ graph invariants, when the correspondence between vertices is known, this enables more detailed comparison and the ability to study how individual nodes behave across different networks \cite{Vogelstein2015-xh}.
If there is some true but unobserved correspondence between a pair of graphs, accurately estimating this correspondence can allow for the use of numerous methods which would be precluded without such a matching.

Graph matching is the problem of discovering the latent correspondence between a pair of graphs by trying to find a correspondence which ensures the two graphs are close to each other in terms of edit distance \cite{ConteReview,Emmert-Streib2016-st}.
Let $\AM{}_n$ and $\PM{}_n$ denote the sets of $n\times n$ adjacency and permutation matrices, respectively.
Formally, for $A,B\in \AM_n$, the graph matching problem seeks an element of
\begin{equation}\label{eq:gm_intro}
  \argmin_{P \in \PM_n} \|A - PBP^T\|_F^2,
\end{equation}
which is a permutation which minimizes the number of entries $i,j$ where $A_{ij}\neq (PBP^T)_{ij}$.
Each minimizing permutation encodes the estimated bijection between the node sets of the two graphs.

Two questions immediately arise:
When does solving the graph matching problem find the true correspondence between the vertex sets? 
Can the above graph matching problem be solved with a computationally tractable algorithm?
For the first questions, a number of authors have considered the setting where $A,B$ are distributed according some joint distribution where there is positive correlation between corresponding entries of $A$ and $B$ \cite{FAP,lyzinski2016information}.
We continue in this vein as described in Section~\ref{sec:model}.

With regards to algorithmic tractability, the graph matching problem is equivalent to the quadratic assignment problem \cite{finke1987quadratic} which is known to be NP hard and has eluded even polynomial approximations.
Hence, rather than directly trying to find a solution to this problem, we instead ask, how much prior information is necessary so that tractable algorithms will yield the desired correspondence?

The computational challenges of the graph matching problem can be alleviated if some prior information about the correspondence is known.
This prior information comes in the form of seed nodes, or anchors, for which the latent correspondence is known.
For a pair of social networks, these might be users with the same user name and location;
for a pair gene networks, these might be genes with common DNA sequence;
and for a pair of brain networks these may be two regions with the same structure, function, and relative location.
Provided a sufficient number of seeds are known, a problem that is initially computationally intractable can become solvable using relatively fast procedures in polynomial time \cite{lyzinski2014seeded,yartseva2013performance,kazemi2015growing}.

However, frequently the prior information may itself be noisy, with errors and uncertainty. 
In the social network context, users with the same user name likely correspond to the same person but this is not guaranteed.
Similarly, if only geolocations are available then we may want to account for the fact that geolocations may not be static across networks.
In general, given similarities---ranging from being very likely to being very unlikely to be a match across the graphs---between pairs of nodes across networks, we can consider using this noisy and uncertain information in the graph matching procedure.

In the remainder of this manuscript, we explore the idea of using prior information about the true correspondence to construct an initialization matrix (Section~\ref{sec:seed}) for a relaxation of the graph matching problem (Section~\ref{sec:faq}).
We consider this procedure within the correlated \ErdosRenyi{} model (Section~\ref{sec:model}), and we show that, provided the initialization contains enough information about the true correspondence, the algorithm will converge rapidly to the true correspondence between node sets (Section~\ref{sec:theory}).
In Section~\ref{sec:theory}, we outline the proofs and discuss the assumptions of the results, but we leave the details of the proofs to the Appendix.
Our theoretical results are further explored in various simulations (Section~\ref{sec:sim}).

\section{Background}

For the remainder of the paper we will use the following notation.
For $n\in \mathbb{N}_+$, let $[n]= \{1,2,\dotsc, n\}$.
Let $\AM_{n} = \{0,1\}^{n\times n}_{sym}$ be the set of adjacency matrices on graphs with $n$ vertices.
Let $\PM_{n}$ denote the set of $n\times n$ permutation matrices, $\{P: P\in\{0,1\}^{n\times n}, PP^T = I\}$, and let $\DSM_{n}$ denote the set of doubly stochastic matrices, $\{D\in \Re^{n\times n}: D_{ij} \geq 0, \sum_{k=1}^n D_{kj} = \sum_{k=1}^n D_{ik} = 1, \forall i, j \in [n] \}$.
Let $J_n$ be the $n\times n$ matrix of all ones and $I_n$ be the $n\times n$ identity matrix.
Often we will omit the subscripts, in which case we assume all matrices are $n\times n$.
Let $\|\cdot \|_F$ denote the Frobenius norm, $\otimes$ the Kronecker product, and $\circ$ the Hadamard product.

\subsection{Fast Approximate QAP Algorithm}
\label{sec:faq}

Since permutation matrices are unitary and $\|\cdot\|_F$ is unitarily invariant, Eq.~\eqref{eq:gm_intro} is equivalent to 
\begin{equation}
  \argmax_{P\in \PM_n} \tr(APBP^T).
\end{equation}
In \cite{FAQ}, the authors propose using a relaxation of this objective function from $\PM_n$ to its convex hull, the set of doubly stochastic matrices $\DSM{}_n$.
The relaxed problem is 
\begin{equation}\label{eq:gm_relax}
  \argmax_{D \in \DSM_n} \tr(ADBD^T).
\end{equation}
Other authors have proposed alternative relaxations \cite{alex,Egozi2013-jh}, however these relaxations will frequently not be guaranteed to have their global maximum correspond to the true latent correspondence \cite{Lyzinski2016-kp}.
On the other hand, for the random graph models which we consider, the optimum of this relaxation will with high probability coincide with the optimum of the original problem (see \cite{Lyzinski2016-kp,lyzinski2017graph} for details).

Algorithm~\ref{alg:faq} describes the gradient ascent approach proposed in \cite{FAQ}.
The gradient of Eq.~\eqref{eq:gm_relax} at $D$, viewed as a matrix, is given by $ADB$.
The Frank-Wolfe algorithm \cite{FW} proceeds by first finding an ascent direction which remains within $\DSM_n$ by solving the linear assignment problem (LAP).
The LAP can be solved in polynomial time using a variety of algorithms including the Hungarian algorithm~\cite{hungarian}, and the Jonker-Volgenant algorithm~\cite{Jonker1987-bs}.

In more detail, step~\ref{al:grad} approximates the objective function with a linear function and optimizes that linear function over the set of permutations, the vertices of the polytope $\DSM$.
In Step~\ref{al:line}, a line search is performed along the segment between $D_k$ and $P_k$, which is easily solved as the objective function is a quadratic function along any line.
The final step of the algorithm projects the doubly stochastic solution onto the set of permutation matrices, which also can be formulated as an instantiation of the linear assignment problem.

\begin{algorithm}[h!]
\caption{Fast approximate quadratic assign program (FAQ) algorithm \cite{FAQ} for graph matching.}\label{alg:faq}
\KwData{$A,B\in\AM$, $D_0\in \DSM$, $k=0$}
\While{not converged}{
\nl $P_k\leftarrow \argmax_{P\in \mathcal{P}} \tr( A D_k B P)$\; \label{al:grad}
\nl $\alpha_k\leftarrow \argmax_{\alpha\in [0,1]} \tr( A D_\alpha  B D_\alpha)$,\\\hspace*{24pt} where $D_{\alpha_k}= \alpha D_k + (1-\alpha) P_k$\; \label{al:line}
\nl  $D_{k+1}\leftarrow D_{\alpha_k}$ and $k\leftarrow k+1$\; 
}
\nl Project  $D_{k}$ onto $\PM{}$ yielding $P^*$\; \label{al:project}
\nl Return $P^*$
\end{algorithm}

Convergence may be assessed by either assessing the change in the objective function Eq.~\eqref{eq:gm_relax} or the change in the matrix $D_k$ according to an appropriate norm.
As discussed below, frequently $D_k$ will end at a permutation which is a local maximum.
Hence, convergence is often easy to assess, and the final projection step is often unnecessary.

\subsection{Model}
\label{sec:model}


The correlated heterogeneous \ErdosRenyi{} model provides a joint distribution for a pair of random graphs wherein each graph is marginally distributed with independent edges, but the adjacencies between the same set of nodes across the two graphs are correlated.

\begin{definition}[Correlated heterogeneous \ErdosRenyi{}]
Suppose $\Lambda \in [0,1]^{n\times n}$ and $R\in [0,1]^{n\times n}$.
A pair of adjacency matrices $(A,B)\sim \CER(\Lambda, R)$ if $A, B \in \AM{}_n$, and 
\begin{enumerate}
  \item for each $1 \leq u < v \leq n$, $A_{uv}$ are independent with $A_{uv}\sim \mathrm{Bernoulli}(\Lambda_{uv})$.
  \item for each $1\leq u < v \leq n$, $B_{uv}$ are independent with $B_{uv}\sim \mathrm{Bernoulli}(\Lambda_{uv})$.
  \item Additionally, $B_{u,v}$ and $A_{u',v'}$ are independent unless $\{ u, v \} = \{ u', v' \}$, in which case $\mathrm{corr}(A_{uv}, B_{uv}) = R_{uv}$.
\end{enumerate}
Finally, for a given $\Lambda, R$, let $Q$ be the matrix of edgewise covariances with entries $Q_{ij} = R_{ij} \Lambda_{ij} (1-\Lambda_{ij})$.
\end{definition}


Under the correlated \ErdosRenyi{} model, the true correspondence between the node sets is given by the identity matrix, and for the remainder of the paper we will assume this, without loss of generality.
Following this convention, the loss function that we consider is $$\ell(D) = n-\tr(D)$$ for any doubly stochastic matrix $D$.

\begin{remark}
Our definition of the correlated \ErdosRenyi{} model has been proposed elsewhere, including \cite{FAP,lyzinski2017graph} among others.
This model also clearly extends models such as the homogeneous \ErdosRenyi{} model where $\Lambda_{ij} = p$ for all $i,j$.
Another popularly studied model is one wherein there is an underlying graph $G$ and $A$ and $B$ are adjacency matrices corresponding to randomly sampled subgraphs of $G$ \cite{kazemi2015growing}.
Provided the distribution for $G$ and the sampling schemes both maintain independence across node-pairs, the correlated \ErdosRenyi{} model will include the resulting distribution for $A$ and $B$. 

The case where $A$ and $B$ are not-identically distributed is not included in the definition provided above.
While we do not consider this case explicitly in the manuscript, in the discussion we discuss how ideas from \cite{lyzinski2017graph} can be employed. 
\end{remark}

\section{Seeds}\label{sec:seed}

In this section, we will explore how seeded vertex information can be incorporated into the FAQ algorithmic framework.

\subsection{Hard Seeds}\label{sec:hard_seed}

In the hard seeding case, a part of the true correspondence between the node sets is known.
From the perspective of the FAQ algorithm, this corresponds to treating certain rows and columns of the permutation matrix as fixed in the graph matching optimization formulation.
Assuming that the seeds (letting $s$ denote the number of seeds) indicate that the $s\times s$ principal submatrix is set to the identity, the optimization procedure then becomes 
\begin{equation}
  \argmax_{D\in \DSM{}_n}  2 \tr(A_{12}D B_{12}^T) + \tr(A_{22} D B_{22} D) \label{eq:sgm}
\end{equation}
where $A_{12}, B_{12}$ denote the submatrices corresponding to the seed rows and non-seed columns and $A_{22},B_{22}$ denote the submatrices corresponding to the non-seed rows and columns.

Many authors \cite{FAP,kazemi2015growing,feizi2016spectral} have shown that using seeds will substantially improve performance, even when the number of seeds is not substantially large.
For example, by only optimizing the first term in Eq.~\eqref{eq:sgm}, the optimum can be computed in polynomial-time by solving a single linear assignment problem, and in some cases only $\theta(\log n)$ seeds are needed to guarantee exact recovery of the true correspondence \cite{FAP} in the \ErdosRenyi{} setting.
Via an alternate strategy leveraging typicality matchings and methods from point--to--point communication analysis, a polynomial-time algorithm that guarantees exact recovery has been demonstrated in the $s=\theta\left(\frac{\log n}{I(X_1,X_2)}\right)$ (where $I(X_1,X_2)$ is the mutual information across the edge distributions) regime \cite{shirani2017seeded}.
Percolation algorithms leveraging very few seeds are proposed in \cite{kazemi2015growing}, and are shown to recover almost all matches in the \ErdosRenyi{} setting using only $\Theta(1)$ seeds.

\subsection{Soft Seeds}\label{sec:soft_seed}

One issue with hard seeds is that seed errors will persist and perhaps introduce further errors in the remaining matching.
Furthermore, hard seeds impose that the prior information about the true correspondence must be one-to-one between nodes.
To alleviate this, we can treat seeded vertices as ``soft seeds'' by setting the $s\times s$ principal submatrix of the initialization of Eq. (\ref{eq:gm_relax}) to the identity, and proceeding with the unconstrained optimization.  

Moreover, soft seeding allows us to utilize prior correspondence information that is not exact knowledge of the one--to--one correspondence.
In general, the prior information may come in the form of similarity scores between nodes across the two node sets.
Our proposal for soft seeding uses this information to construct a suitable doubly stochastic matrix $D^0$ from which we will initialize the FAQ algorithm (see Section \ref{sec:construct_ds} for detail).

As the FAQ algorithm employs an indefinite relaxation, the algorithm can be very sensitive to the initial point.
If the similarity scores yield a doubly stochastic initialization which is close enough to the global optimum, then even though the problem is indefinite, the gradient ascent procedure will ideally yield the global optimum.
Our theoretical goals for the remainder of this manuscript are to study the set of doubly stochastic initializations that will guarantee FAQ terminates at the correct permutation.

\subsubsection{Constructing doubly stochastic initializations}\label{sec:construct_ds}

Construction of the matrix $D^0$ will depend on the specific prior information available.
The simplest form of prior information will consist of a one-to-one map between subsets of $[n]$ which can be represented as a subset $S \subset [n]^2$.
Given $S$, a standard way to construct a matrix $D^0$ will be to set $D^0_{ij} = 1$ for all $(i,j)\in S$, set $D^0_{ij} = 0$ for all $(i,j)$ where $(i,j')\in S$ for some $j'$ or $(i',j)\in S$ for $i'$.
The remaining entries are set to $1/(n-|S|)$, corresponding to the barycenter of the doubly stochastic matrices of size $n-|S|$.

A more general form of prior information provides subsets of vertices in one graph which correspond to sets of vertices in the other graph.
This can be represented as a pair of partitions, $\eta_1, \eta_2,\dotsc, \eta_s$ and $\zeta_1,\zeta_2, \dotsc,\zeta_s$ of $[n]$, with $|\eta_k|=|\zeta_k|$ for each $k\in [s]$.
The soft seeding then provides that nodes in $\eta_k$ in the first graph correspond to nodes in $\zeta_k$ in the second graph.
The matrix $D^0$ can then be constructed as $D^0_{ij} = 1/|\eta_k|$ for every $i\in \eta_k$ and $j\in \zeta_k$ for each $k$.
All other entries are set to zero.
Each submatrix with rows and columns in a given element of the partition will  correspond to the barycenter of the doubly stochastic matrices of size $|\eta_k|$.
If $\eta_1=\zeta_1, \eta_2=\zeta_2, \dotsc, \eta_s = \zeta_s$, then $\tr(D^0) = s$.
Note that, as a special case, we could consider that each of $\eta_1,\eta_2,\dotsc, \eta_{s-1}$ are singletons and $\eta_s$ and $\zeta_s$ contain the remaining vertices.
This corresponds to the one-to-one soft seeding described above.

Prior information may also come in the form of node attributes or features which are believed to be relatively similar between corresponding nodes in the two graphs.
Suppose $X_1,\dotsc, X_n\in \Re^d$ and $Y_1,\dotsc, Y_n\in \Re^d$ denote features for each node.
Given a similarity function or kernel $\kappa:\Re^d \times \Re^d \mapsto \Re^+$, we can construct a similarity matrix $S\in \Re_+^{n \times n}$ where $S_{ij} = \kappa(X_i, X_j)$.
The matrix $S$ is not guaranteed to be doubly stochastic, but we can construct a doubly stochastic matrix by setting $D^0 = \argmin_{D\in \DSM_n} \|D - S\|_F$.
Solving this problem is rather straightforward and relatively simple algorithm to do so can be found in \cite{Wang2010-og}, which also suggests other methods for learning a doubly stochastic matrix.
A fast, commonly used (less principled) approach to find a doubly stochastic matrix corresponding to $S$ is to iteratively rescale the rows and columns by their inverse row-sums and column-sums respectively.
This is called the Sinkhorn-Knopp algorithm \cite{Sinkhorn1967-eo} and is known to converge and be unique under certain conditions on $S$.

Random doubly stochastic matrices may also be useful for further exploring the set of local optima \cite{Sussman2018-lb}.
A basic example is to take $D^0$ as a random permutation matrix.
Another possibility is to sample a random similarity matrix $S$ and construct $D^0$ as described above.
In general, we can take any convex combination of doubly stochastic matrices to get another doubly stochastic matrix which would allow us to combine these various approaches.
Some tools to construct and sample doubly stochastic matrices are provided in the R package \texttt{iGraphMatch} \cite{Qiao2018-xy}, which also implements the FAQ algorithm.

\section{Theoretical guarantees in FAQ}\label{sec:theory}

The parameters for the correlated \ErdosRenyi{} model, $\Lambda$ and $R$, and the initial matrix $D^0$ will impact whether the FAQ procedure will yield the correct correspondence.
If $A,B\sim \CER{}(\Lambda, R)$, then the most important aspect of $D^0$ will be its trace, as this provides a quantification for the amount of correct information contained in the seeding. 
When $D^0 = I$, $\tr(D^0)=n$ represents fully correct seed information about all correspondences. 
When $D^0 = \frac{1}{n} J$, $\tr(D^0)=1$ which, in this case, indicates essentially no information is provided by the seeding, as each node is equally likely to correspond to each other node.

In the case of homogeneous \ErdosRenyi{} marginals with constant correlation, $A, B \sim \CER{}(p J, r J)$, almost sure polynomial-time performance guarantees via FAQ only require that the $\tr(D^0)$ is sufficiently large.
In particular, we have the following.
\begin{theorem}\label{thm:hom}
Suppose $A,B\sim \CER{}(p J, r J)$. 
Let $\delta \in (0, 1 / 2)$.
Set $\ell = \frac{2\sqrt{n^{1+2\delta}}(3p(1-p) + 2r)}{(rp(1-p))^2}$.

Let $C=rp(1-p)$ and $\epsilon=3p(1-p)+2r$.
Let $\mathcal{P}_{n,\ell}$ be the set of permutation matrices with trace at least $\ell$.
With probability at least 
\[
    1  - 2 \exp\left\{- \Theta \left( \frac{C^4}{\epsilon^2} n^{2-\delta} \log n  \right) \right\},
\]
the FAQ procedure will converge in at most two steps to the identity matrix when started at any doubly stochastic matrix which is a convex combination of permutation matrices in $\mathcal{P}_{n,\ell}$.
\end{theorem}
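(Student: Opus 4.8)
The plan is to reduce the two--step convergence to two statements about the random pair $(A,B)$: that the very first linear--assignment subproblem already returns the truth, and that the ensuing line search then lands exactly on $I$. Since the gradient of the relaxed objective $\tr(ADBD^T)$ at $D$ is proportional to $AD B$, Step~\ref{al:grad} with $D_0$ returns $P_0=\argmax_{P\in\PM_n}\tr(AD_0 B P)$, and this equals $I$ as soon as $M:=AD_0B$ is diagonally dominant in the sense $\min_i M_{ii}>\max_{i\neq j}M_{ij}$. Indeed, for any $\pi\neq\id$ with non--fixed set $F$ (so $|F|\ge 2$), one has $\sum_i M_{i\pi(i)}=\sum_{i\notin F}M_{ii}+\sum_{i\in F}M_{i\pi(i)}<\sum_i M_{ii}$, because each summand with $i\in F$ is an off--diagonal entry and hence strictly below every diagonal entry. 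So first I would establish this diagonal--dominance event.

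The signal comes from the means. Writing $M_{ij}=\sum_{k,l}A_{ik}(D_0)_{kl}B_{lj}$ and using that $A_{uv},B_{uv}$ are correlated only across matching edges, a direct computation gives $\Ex M_{ii}=C\big(\tr(D_0)-{(D_0)}_{ii}\big)+p^2(n-2+{(D_0)}_{ii})$ and $\Ex M_{ij}=C\,{(D_0)}_{ji}+p^2(n-2+{(D_0)}_{ij})$ for $i\neq j$, where $C=rp(1-p)$. The diagonal therefore inherits a signal $C\tr(D_0)$ from the diagonal of $D_0$, while off--diagonals inherit only an $O(1)$ correlation term, so $\Ex M_{ii}-\Ex M_{ij}=C\tr(D_0)+O(1)\ge C\ell+O(1)$. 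This is exactly why $\tr(D_0)$ is the governing quantity and why the threshold $\ell=\Theta\!\big(n^{1/2+\delta}\epsilon/C^2\big)$ is the relevant one; note that only $\tr(D_0)\ge\ell$ is used, and this holds automatically for any convex combination of permutations in $\mathcal P_{n,\ell}$.

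For the fluctuations, each $M_{ij}$ is a bilinear form in the independent edge--pairs $(A_e,B_e)$. Because $D_0$ is doubly stochastic, perturbing a single edge--pair changes $M_{ij}$ by $O(1)$ and only $O(n)$ edge--pairs are relevant to any fixed entry, so a bounded--difference (or Bernstein) inequality yields sub--Gaussian tails for $M_{ij}-\Ex M_{ij}$, and a union bound over the $O(n^2)$ entries (and, for a statement uniform over all admissible $D_0$, an additional union over $\mathcal P_{n,\ell}$) controls $\max_{ij}\lvert M_{ij}-\Ex M_{ij}\rvert$. Choosing the deviation threshold below $\tfrac12 C\ell$ forces diagonal dominance, and hence $P_0=I$, on an event of probability $1-2\exp\{-\Theta(\tfrac{C^4}{\epsilon^2}n^{2-\delta}\log n)\}$.

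It remains to follow the algorithm. Step~\ref{al:line} maximizes the relaxed objective along the segment $[D_0,I]$; since, by the results invoked in Section~\ref{sec:faq} (\cite{Lyzinski2016-kp,lyzinski2017graph}), $I$ is with high probability the global maximizer of $\tr(ADBD^T)$ over $\DSM_n$, the maximizer on this segment is the endpoint $I$, giving $D_1=I$. A second iteration at $D_1=I$ again returns $P_1=I$, this being the $\tr(D)=n$ case of the diagonal--dominance estimate (now satisfied with room to spare), after which the line search does not move and the procedure has converged: two iterations in all. I expect the main obstacle to be the uniform concentration of the previous paragraph: the entries of $AD_0B$ are dependent quadratic forms in the correlated Bernoulli edge--pairs, and controlling all of them simultaneously against a margin as small as $C\ell=\Theta(n^{1/2+\delta})$ --- sharply enough both to survive the union over entries and, if one insists the conclusion hold simultaneously for every admissible $D_0$, the union over $\mathcal P_{n,\ell}$ --- is the delicate quantitative heart of the argument.
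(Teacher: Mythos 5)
There is a genuine gap, and it is exactly at the point you flag as ``the delicate quantitative heart'': the uniformity over initializations. The theorem is a statement that holds \emph{simultaneously} for every convex combination of matrices in $\mathcal{P}_{n,\ell}$ (this uniformity is what later yields the no-bad-local-maxima implication), and $|\mathcal{P}_{n,\ell}| = \exp\{\Theta(n\log n)\}$. Your argument gives, for each fixed $D_0$, diagonal dominance of $M=AD_0B$ with margin $\tfrac12 C\ell = \Theta(n^{1/2+\delta}\epsilon/C)$, while each entry $M_{ij}$ fluctuates on the scale $\sqrt{\epsilon n}$ (it depends on $\Theta(n)$ independent edge variables, each moving it by $O(1)$); a bounded-difference or Bernstein bound therefore gives per-initialization failure probability $\exp\{-\Theta(n^{2\delta}\epsilon/C^2)\}$. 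Since $2\delta<1$, this exponent is $o(n)$ and cannot absorb a union over $\exp\{\Theta(n\log n)\}$ starting points; it also falls far short of the rate $\exp\{-\Theta(C^4 n^{2-\delta}\log n/\epsilon^2)\}$ you assert at the end of your third paragraph, which is never derived. So the statement you would actually prove is a pointwise, fixed-$D_0$ result with a much weaker probability --- essentially the paper's Theorem~\ref{thm:pw} --- not the theorem.

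The paper's proof shows why this cannot be repaired within your one-step framework: it never claims that the first linear assignment step from a trace-$\ell$ start returns $I$. Instead, Proposition~\ref{prop:step_dir} shows only that, uniformly over all starts $P$ with $\tr(P)\geq\ell$, the first step direction is \emph{some} permutation with trace $>n-m$, where $m = C^2 n^{1-\delta}\log n/\epsilon$. The union bound there survives precisely because any ``bad'' direction $Q$ has at least $m$ non-fixed points: comparing traces (not entries) via Lemma~\ref{lem:main}, the signal is $C\,\tr(P)\,k$ with $k\geq m$ while the fluctuation scales like $\sqrt{\epsilon n k}$, so each comparison fails with probability $\exp\{-\Theta(k\,n^{2\delta}\epsilon/C^2)\}$, and the factor $k\geq m$ in the exponent pays for the $n^{\,n-T+k}$ candidate pairs. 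Your one-step-to-$I$ claim would instead have to rule out, uniformly over all starts, alternatives with as few as $k=2$ non-fixed points, for which no such boost exists; the exponent $\Theta(n^{2\delta})$ is then hopeless against the $n\log n$ entropy of $\mathcal{P}_{n,\ell}$. Only after the first step lands on a permutation of trace $>n-m$ --- a set of size merely $n^{O(m)}$ --- does the ``next step is exactly $I$'' argument (Corollary~\ref{cor:unif_os}) go through by union bound, which is why the result is a two-step theorem. Two smaller points: entrywise diagonal dominance is strictly lossier than the paper's permutation-by-permutation comparison of $\tr(AD_0B(I-P))$, and your line-search step leans on the external global-optimum results of \cite{Lyzinski2016-kp,lyzinski2017graph}, whereas the paper handles it self-containedly by checking the sign of the derivative of the one-dimensional quadratic at both endpoints, using Lemma~\ref{lem:main} again.
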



Rather than directly prove this theorem, we use a more general result for heterogeneous distributions.
To this end, note that for any permutation matrices $P,Q$,
\begin{align}
 \Ex[ \tr(APBQ^T) ] = & \tr(\Lambda P \Lambda Q^T) \label{eq:trex1} \\
 & + \sum_{i\neq j} S_{ij} (P_{ii} Q_{jj} + P_{ji} Q_{ij}) \label{eq:trex2},
\end{align}
where $S_{ij}=\mathrm{Cov}(A_{ij}, B_{ij}) = R_{ij} \Lambda_{ij} (1-\Lambda_{ij})$.
In the homogeneous setting, $\Lambda = p J$ and $R = r J$, the first term does not depend on $P,Q$, and the second term is equal to $rp(1-p) (\tr(P)\tr(Q) + \tr(PQ - 2 P \circ Q)$, 
where ``$\circ$'' denotes the Hadamard product which here quantifies the number of common fixed points between $P$ and $Q$.
In the heterogeneous case, the term $\tr(\Lambda P \Lambda Q^T)$ will depend on the structure of $\Lambda$ and can possibly decrease as $\tr(Q)$ increases.
On the other hand, the second term will again vary with $P$ and $Q$ approximately as the product of their traces.
Control of these two terms is key in the heterogeneous setting, as we see in the following theorem. 

\begin{theorem}\label{thm:het}
Suppose $A,B\sim \CER{}(\Lambda, R)$.
Let $\epsilon = \max_{i,j} 3\Lambda_{ij}(1-\Lambda_{ij} + 2R_{ij}$.
Let $\delta, C >0$ satisfy $\delta < 1 / 2$, and $C < \min_{i,j} R_{ij} \Lambda_{ij}(1-\Lambda_{ij})$.

Set $\ell = 2\sqrt{n^{1+2\delta}}\epsilon/C^2$ and $m=C^2 n^{1-\delta}\log n/\epsilon$.

Suppose that 
\begin{equation}
  \Ex[\tr(APB(Q-Q')] \geq C \tr(P)\tr(Q - Q') \label{eq:ex_ass}
\end{equation}
 for all permutation matrices $P, Q, Q'$ with $\tr(P) \geq \ell$ and $\tr(Q) > n - m$ and $\tr(Q)>\tr(Q')$.

Let $\mathcal{P}_{n,\ell}$ be the set of permutation matrices with trace at least $\ell$.
With probability at least 
\[
    1  - 2 \exp\left\{- \Theta \left(  \frac{C^4}{\epsilon^2}n^{2-\delta} \log n  \right) \right\},
\]
the FAQ procedure will converge in at most two steps to the identity matrix when started at any doubly stochastic matrix which is a convex combination of permutation matrices in $\mathcal{P}_{n,\ell}$.
\end{theorem}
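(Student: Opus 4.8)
The plan is to track the two Frank--Wolfe iterations of Algorithm~\ref{alg:faq} started at $D^0 = \sum_k \lambda_k P^{(k)}$ with each $P^{(k)} \in \mathcal{P}_{n,\ell}$, and to show that the first iteration carries us to a permutation whose trace exceeds $n-m$, while the second carries us exactly to $I$. The engine throughout is the expected-trace identity \eqref{eq:trex1}--\eqref{eq:trex2} combined with hypothesis \eqref{eq:ex_ass}: for any permutation $P$ with $\tr(P)\ge \ell$ and any $Q'$ with $\tr(Q')<n$, taking $Q=I$ in \eqref{eq:ex_ass} gives $\Ex[\tr(APB(I-Q'))] \ge C\,\tr(P)\,(n-\tr(Q'))$. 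Since $\tr(AD^0 B(I-Q')) = \sum_k \lambda_k \tr(AP^{(k)} B(I-Q'))$ is linear in the seeding, summing yields $\Ex[\tr(AD^0 B(I-Q'))] \ge C\,\tr(D^0)\,(n-\tr(Q')) \ge C\ell\,(n-\tr(Q'))$. Thus in expectation the identity strictly beats every competitor, with a gap growing linearly in the displacement $n-\tr(Q')$.

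First I would analyze Step~\ref{al:grad} of the first iteration, $P_0 = \argmax_{P}\tr(AD^0 BP)$. The event $P_0 \ne I$ forces $\tr(AD^0 B(I-P_0)) \le 0$, so it suffices to keep $\tr(AD^0 B(I-P))$ positive simultaneously over a family of competitors $P$. For a fixed $P$ of displacement $t := n - \tr(P)$, the quantity $\tr(AD^0 B(I-P))$ is a bilinear form in the entries of $A$ and $B$ (independent except for the correlation of matched pairs $A_{uv},B_{uv}$), whose variance/bounded-difference proxy I would control in terms of $\epsilon$ and $t$; a Bernstein/McDiarmid estimate then bounds $\Pr[\tr(AD^0 B(I-P)) \le 0]$ by $\exp\{-\Theta(\mathrm{gap}^2/\mathrm{variance})\}$. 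Since there are at most $n^t$ permutations of displacement $t$, a union bound requires the expected gap $C\ell t$ to dominate the noise scale times $\sqrt{t\log n}$, and this is precisely what the choice $\ell = 2\sqrt{n^{1+2\delta}}\epsilon/C^2$ is calibrated to ensure for all $t \ge m$. The output of this step is that, with the stated probability, $\tr(P_0) > n-m$.

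Next I would handle the line search in Step~\ref{al:line} and the second iteration. The quadratic $\alpha \mapsto \tr(AD_\alpha B D_\alpha)$ along $[D^0,P_0]$ has leading coefficient $\tr(A(D^0-P_0)B(D^0-P_0))$; using the same concentration I would argue that the maximizing $\alpha_0$ places $D_1$ at (or arbitrarily near) the vertex $P_0$, so that $\tr(D_1) > n-m$ as well. The crucial gain is that $D_1$ now has trace close to $n$ rather than merely $\ell$: reapplying \eqref{eq:ex_ass} with $P = D_1$ (all of whose component permutations have trace $\ge \ell$) yields an expected gap $\ge C\,\tr(D_1)\,(n-\tr(Q')) \gtrsim C n\,(n-\tr(Q'))$, large even for a single transposition ($t=1$). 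Hence the same concentration-plus-union-bound argument, now with a far larger gap, rules out every $P \ne I$ and forces the second LAP to return $P_1 = I$ exactly; a final check (again using that $I$ maximizes the linear functional of Step~\ref{al:grad} and is therefore a fixed point) gives $D_2 = I$. Combining the two failure probabilities yields the claimed bound $2\exp\{-\Theta(C^4 n^{2-\delta}\log n/\epsilon^2)\}$.

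The main obstacle, I expect, is the uniform concentration: one must select a tail inequality whose variance proxy is genuinely of order $\epsilon$ (not a crude $O(1)$ bounded difference), and then balance the expected gap $C\tau t$ (with $\tau$ the current trace) against both the noise term and the entropy $t\log n$ of the permutation family, uniformly in $t$. The two regimes $\tau \approx \ell$ (Step~1, which forces only $t \ge m$) and $\tau \approx n$ (Step~2, which forces $t \ge 1$) are exactly what split the argument into two iterations and dictate the definitions of $\ell$ and $m$. A secondary but genuine difficulty is the line-search/curvature analysis: because the relaxation is indefinite, one cannot assume the Frank--Wolfe step lands near $P_0$, so the lower bound $\tr(D_1) > n-m$ must itself be derived from the same high-probability control on $A$ and $B$.
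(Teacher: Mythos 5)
Your overall architecture does mirror the paper's: a bounded-difference concentration bound for $\tr(APB(Q-Q'))$ (the paper's Lemma~\ref{lem:main}, built from Propositions~\ref{prop:kim} and~\ref{prop:bibern}), a first step whose direction has trace exceeding $n-m$ (Proposition~\ref{prop:step_dir}), a maximal line-search step (Proposition~\ref{prop:step_size}), and one-step convergence from high-trace points (Theorem~\ref{thm:pw}, Corollary~\ref{cor:unif_os}). But there are two genuine gaps. The first is uniformity over starting points. The theorem asserts a single event of the stated probability on which FAQ succeeds from \emph{every} convex combination of matrices in $\mathcal{P}_{n,\ell}$; this uniformity is what yields the ``no local maxima with trace above $\ell$'' implication in Section~\ref{sec:loc_max}. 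The paper obtains it by union-bounding over all starting \emph{permutations} of trace $T\geq\ell$ (an entropy factor of order $n^{n-T}$) as well as over competitors, and then passing to convex combinations by linearity of $\tr(APBQ^T)$ in $P$. Your entropy accounting includes only the competitors ($n^t$ at displacement $t$), with $D^0$ held fixed, which gives a pointwise statement. The discrepancy shows in your calibration claim: balancing the gap $C\ell t$ against noise times $\sqrt{t\log n}$ only requires $\ell \gtrsim \sqrt{\epsilon n\log n}/C$, which is the threshold of the paper's \emph{pointwise} Theorem~\ref{thm:pw}; the theorem's much larger $\ell = 2\sqrt{n^{1+2\delta}}\epsilon/C^2$ is dictated precisely by the starting-point entropy (of order $n\log n$ in the exponent) that your bound never has to beat. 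The fix is the paper's: include the union over starting permutations and verify $C^2\ell^2 m/(\epsilon n) \approx n^{1+\delta}\log n \gg n\log n$.

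The second gap is the line search. You propose to read off the leading coefficient $\tr(A(D^0-P_0)B(D^0-P_0)^T)$ and conclude the maximizer sits at the vertex $P_0$; implicitly this requires the quadratic to be convex along the segment, which is not available---the relaxation is indefinite, and the expected leading coefficient along such segments can be negative. The paper argues differently: the derivative in $\alpha$ is linear, so it suffices to show it is positive at \emph{both endpoints}; at the current iterate it is positive because $P_0$ is an ascent direction, and at the far endpoint it equals $2\tr(AP_0B(P_0-D^0)^T)$, whose positivity must itself be established by concentration. Crucially, the expectation lower bound needed there is assumption~\eqref{eq:ex_ass} with $Q=P_0\neq I$, whereas your stated ``engine'' is the $Q=I$ specialization of~\eqref{eq:ex_ass}; as the paper remarks, $Q=I$ suffices for Theorem~\ref{thm:pw}, Corollary~\ref{cor:unif_os}, and Proposition~\ref{prop:step_dir}, but not for the step-size Proposition~\ref{prop:step_size}. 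Relatedly, landing ``at or arbitrarily near'' $P_0$ is not good enough: if $D_1$ retains non-negligible weight on $D^0$, the second LAP can be corrupted by competitors of small displacement $t<m$, for which your first-stage events give no sign control on the $D^0$-component. You correctly flagged the line search as the hard part, but the route you sketch for it would fail.
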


Note that in the homogeneous setting, the condition in Eq.~\eqref{eq:ex_ass} below is automatically satisfied for $C \lessapprox r p (1 - p)$ (see the discussion below for a precise form).
In general, Eq~\eqref{eq:ex_ass} is difficult to verify, but we suspect (and have found empirically) it should hold quite widely.
Note that it may not necessarily hold that $\tr(Q)>\tr(Q')$ implies $\tr(\Lambda P \Lambda (Q-Q'))>0$.
However, this implies that there is latent structure in $\Lambda$ which for which $Q'$ provides a better alignment than $Q$.
In the present setting, we may not be able to overcome this, as this represents the case where there are potential symmetries in the model which occur with non-trivial probability.

Note that, as argued in \cite{lyzinski2017graph}, by centering the matrices $A$ and $B$ according to an estimate of $\Lambda$, the term in Eqs. (\ref{eq:trex1}) can be made sufficiently small so that the expectation is dominated by Eq.~\eqref{eq:trex2}.
Additionally, it can frequently hold that Eq.~\eqref{eq:trex1} will increase with the trace of $P$ and $Q$, when the structure of $\Lambda$ itself is informative about the alignment.

Provided $\Ex[\tr(\Lambda P \Lambda (Q-Q'))]$ is non-negative or a sufficiently small negative value for the relevant permutations, Eq.~\eqref{eq:ex_ass} will hold provided we can lower bound 
\begin{equation}
  \sum_{i=1}^n \sum_{j\neq i} S_{ij} (P_{ii} (Q_{jj}-Q'_{jj}) + P_{ji} (Q_{ij}-Q'_{ij})) \label{eq:trex2_diff}
\end{equation}
based on Eq.~\eqref{eq:trex2}.
Returning again to the homogeneous \ErdosRenyi{} case (or in the heterogeneous case where $S$ is a constant matrix), a lower bound on this will hold provided that for all $P$ and all $i,j$ with $i\neq j$, it holds that $\sum_{k\neq i} P_{kk} - P_{ij} \geq C'\tr(P)$, for some $C'>0$.
For any $P$, in the homogeneous setting this inequality holds with $C' = 1 -\frac{1}{\tr(P)} > 1 - \frac{1}{\ell}$.
Hence, we can take $C$ to satisfy $C = \left( 1 - \frac{1}{\ell} \right) p r (1-p)$.

In the more general case, it may not be the case that a lower bound on Eq.~\ref{eq:trex2_diff} will hold for all $Q$, $Q'$.
Indeed,  $Q'$ could have slightly lower trace but align more highly  correlated parts of the graph pair than $Q$ which would lead to Eq.~\eqref{eq:trex2_diff} being negative.
To alleviate this, we note that we state Theorem~\ref{thm:het} with these relatively strong assumptions but our proof technique can apply for a smaller subset of permutation matrices $P,Q, Q'$ and our proof technique will apply to this subset.

We will now outline our proof technique.
Our proofs proceed in three main stages.
First, we prove a technical Lemma showing that $\tr(APB(Q-Q'))$ will concentrate around its expectation and hence have the correct sign with high probability.
As another stepping stone, a pointwise result indicates that, with high probability, FAQ will converge in one step to the identity for a fixed $D^0\in \DSM$ with $\tr(D^0)$ big enough.

Building on the pointwise result, we show that, with high probability, the FAQ algorithm will converge in one step for all $D^0$ with $\tr(D^0)=n-o(n)$.
The final stage of our proof is to show that, with high probability, the FAQ algorithm will take its first step to a matrix with trace $n-m=n-o(n)$ for all $D^0$ with $\tr(D^0)$ large but less than $n-m$.
This stage we divide into two propositions.
In Proposition~\ref{prop:step_dir}, we show that the search direction for the first step is towards a matrix with large trace, and in Proposition~\ref{prop:step_size}, we show that a maximal step size is used.
As the one-step result is uniform as well, we conclude the two-step result.

We remark that Theorem~\ref{thm:pw}, Corollary~\ref{cor:unif_os}, and Proposition~\ref{prop:step_dir}, all depend on the assumption in Eq.~\eqref{eq:ex_ass} only with $Q=I$.
Hence, in Eq.~\eqref{eq:trex2_diff} $P_{ii}(Q_{jj}-Q'_{jj})$ is always nonnegative, so the assumptions for these results will more easily hold.

As a final note regarding our assumptions, it is certainly possible that Eq.~\ref{eq:ex_ass} will hold only for a subset of the relevant matrices.
In this case, our proof technique as outlined would still function but the two step result would hold only for the corresponding subset of $\DSM$.

\begin{remark}
In the appendix we prove results for the case that $D^0$ is a permutation matrix with sufficiently large trace.
We do this for ease of notation and in order to employ a union over finite sets of permutations. 
While the theorems above are stated in terms of any convex combination of permutation matrices with sufficiently large trace, the theory in the appendix extends to this convex set as the theorems in the appendix provide probabilistic bounds for all permutations matrices simultaneously, and hence for all convex combinations thereof as $\tr(APBQ^T)$ is linear in $P$.
\end{remark}


Before discussing the implications of the above theorems on the existence of bad (i.e., with a large trace) local maxima to the graph matching objective function, note the algorithmic implications of the results in the FAQ framework.
Recall that the final step of the FAQ algorithm, Algorithm~\ref{alg:faq}, is to project the final doubly stochastic matrix to the nearest permutation matrix.
The above results indicate that frequently this step will not be needed since the final doubly stochastic matrix will itself be a permutation matrix.

\subsection{Implications of the soft-seeding theory}

In this section, we consider some of the algorithmic implications of the soft seeding theory, Theorems \ref{thm:hom} and \ref{thm:het}.

\subsubsection{Local maxima}\label{sec:loc_max}
Theorems~\ref{thm:hom} and \ref{thm:het} both have strong implications on the local maxima of the relaxation in Eq.~\eqref{eq:gm_relax}.
Recall, that this problem is non-convex and in general can have many local maxima, which is the main motivation for our soft seeding approach.

In the correlated \ErdosRenyi{} setting, if the event described in Theorem~\ref{thm:het} holds then this implies that there will be no local maxima $D$ with $\tr(D)>\ell  = 2\sqrt{n^{1+\delta}}\epsilon/C^2$.
Furthermore, if $\Ex[\tr(ADBD^T)]$ scales with $\tr(D)^2$, local maximum with $\tr(D)<\ell$ will have an objective function significantly less than the objective function for the global maximum.
This observation can be used to develop heuristics to verify whether the found match is believed to be accurate or not.
When random restarts (i.e., randomized initializations $D^0$) are applied, we can explore the set of local maximum and ideally seek out the single local maximum with much larger objective function, as this will correspond to the global maximum and the true correspondence.

\subsubsection{The good and the bad of hard seeds}

If there are $s$ hard seeds, with $b<s$ of these seeds misspecified, then these label errors will cascade throughout the hard-seeded FAQ procedure.
If however, we treat all $s$ seeds as soft seeds with one--to--one correspondences, then provided $s-b>\ell  = 2\sqrt{n^{1+\delta}}\epsilon/C^2$  in the correlated \ErdosRenyi{} setting, with high probability FAQ will converge to the identity in at most two steps.
While $\sqrt{n^{1+\delta}}\gg \log(n)$ (recalling $\Theta(\log n)$ hard seeds are needed for almost sure perfect recovery in the restricted seeded FAQ formulation in \cite{lyzinski2014seeded}), soft seeding is able to leverage this additional information contained in the good seeds to mitigate the effect of the badly specified seeds.
This further suggests soft seeding as a practical alternative to hard seeding when the veracity of some of the hard seeds is in question.



\section{Simulations}\label{sec:sim}

To illustrate the utility of soft seeding and its impact on performance, we will investigate a variety of simulated settings.
The theoretical results indicate that if started at a matrix $D^0$ with sufficiently large trace, after two steps the FAQ algorithm will converge at the identity with high probability.
In the following we will investigate this as well as performance beyond two iterations.
In the \ErdosRenyi{} setting, we measure matching accuracy via the number of vertices mapped correctly to their corresponding vertices.
If the output of an algorithm is $P$, our accuracy measure is $\tr(P)/n$.
For an intermediate doubly stochastic matrix output, we will use the same measure of accuracy, $\tr(D)/n$ which can be measured at each iteration of the FAQ procedure.

We will consider three different instances of the $\CER$ model to investigate the impact of $\Lambda$ on overall performance.

\begin{description}
  \item[HOM] The Homogeneous \ErdosRenyi{} setting where $$A,B\sim  \CER(0.5 J_{300}, 0.5 J_{300}).$$
  \item[SBM] The stochastic blockmodel setting where 
  $$A,B \sim \CER \left( \ \left( 0.4 I_6 + 0.1 J_6 \right) \otimes J_{50}, 0.5 J_{300}\right),$$ so that marginally $A,B$ are from a stochastic blockmodel with 6 blocks with 50 vertices each. The probability of an edge between two vertices in the same block is $0.5$ and for two nodes in different blocks is $0.1$.
  \item[RDPG] The random dot product graph setting with $$A,B\sim \CER(XX^T, 0.5 J_{300}),$$ where $X=[X_1,\dotsc,X_{300}]^T\in \Re^{300\times d}$ and $$X_1,\dotsc,X_{300} \stackrel{iid}{\sim} \mathrm{Dirichlet}(1,1,1)_{1:2},$$ the first two coordinates of a uniform distribution on the 3-simplex. This is analogous to a degree-corrected mixed-membership stochastic block model with 2 blocks.
\end{description}

\subsection{Accuracy over Iterations}

First, we will illustrate the performance as the trace of the initial matrix $D^0$ is given by $s \in \{1, 5, 6, 8, 13, 35, 83\}$.
If $s$ divides $n$, we construct $D^0$ as $I_{s} \otimes \frac{s}{n} J_{n/s}$.
Hence, $D^0$ will be a block diagonal block, with constant blocks with entries $s/n$ of size $n/s$; this yields $\tr(D^0)=s$.
If $s$ does not divide $n$, the first $s-1$ blocks have size $\lfloor s/n \rfloor$ and the final block has size $n-(s-1) \lfloor s/n \rfloor$.
Each diagonal block of $D^0$ is itself a constant doubly stochastic matrix, the barycenter for the set of doubly stochastic matrices of the given size.
When $s=1$, $D^0$ is simply the barycenter of $\mathcal{D}_n$.

\begin{figure}
  \centering
    \begin{subfigure}[t]{.48\textwidth}
    \centering
    \includegraphics[width=\linewidth]{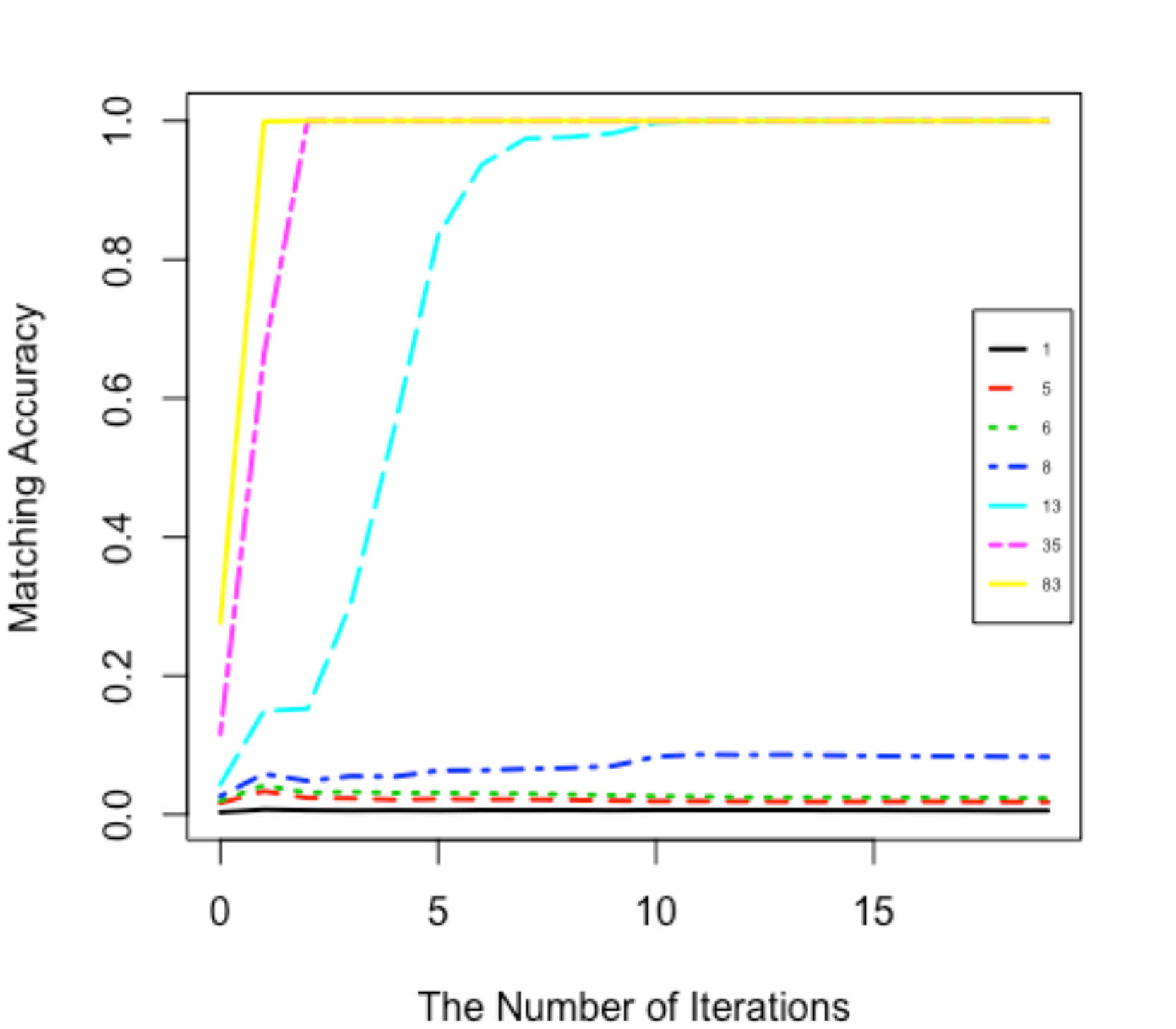}
    \caption{ HOM average accuracy}
    \label{fig:er}
  \end{subfigure}\hfill
  \begin{subfigure}[t]{.48\textwidth}
    \centering
    \includegraphics[width=\linewidth]{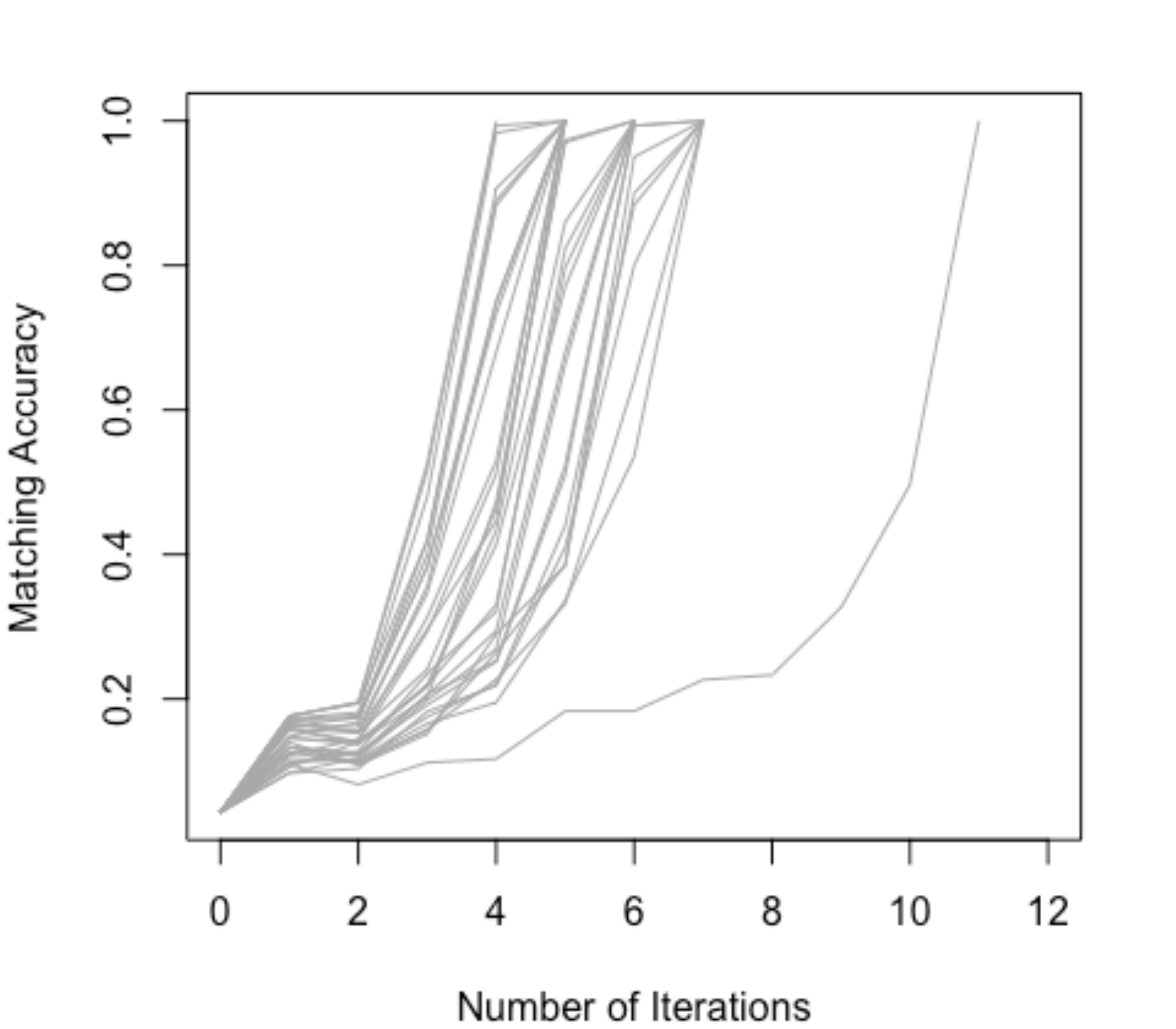}
    \caption{ HOM accuracy trajectories}
    \label{fig:er_traj}
  \end{subfigure}\\
  \begin{subfigure}[t]{.48\textwidth}
    \centering
    \includegraphics[width=\linewidth]{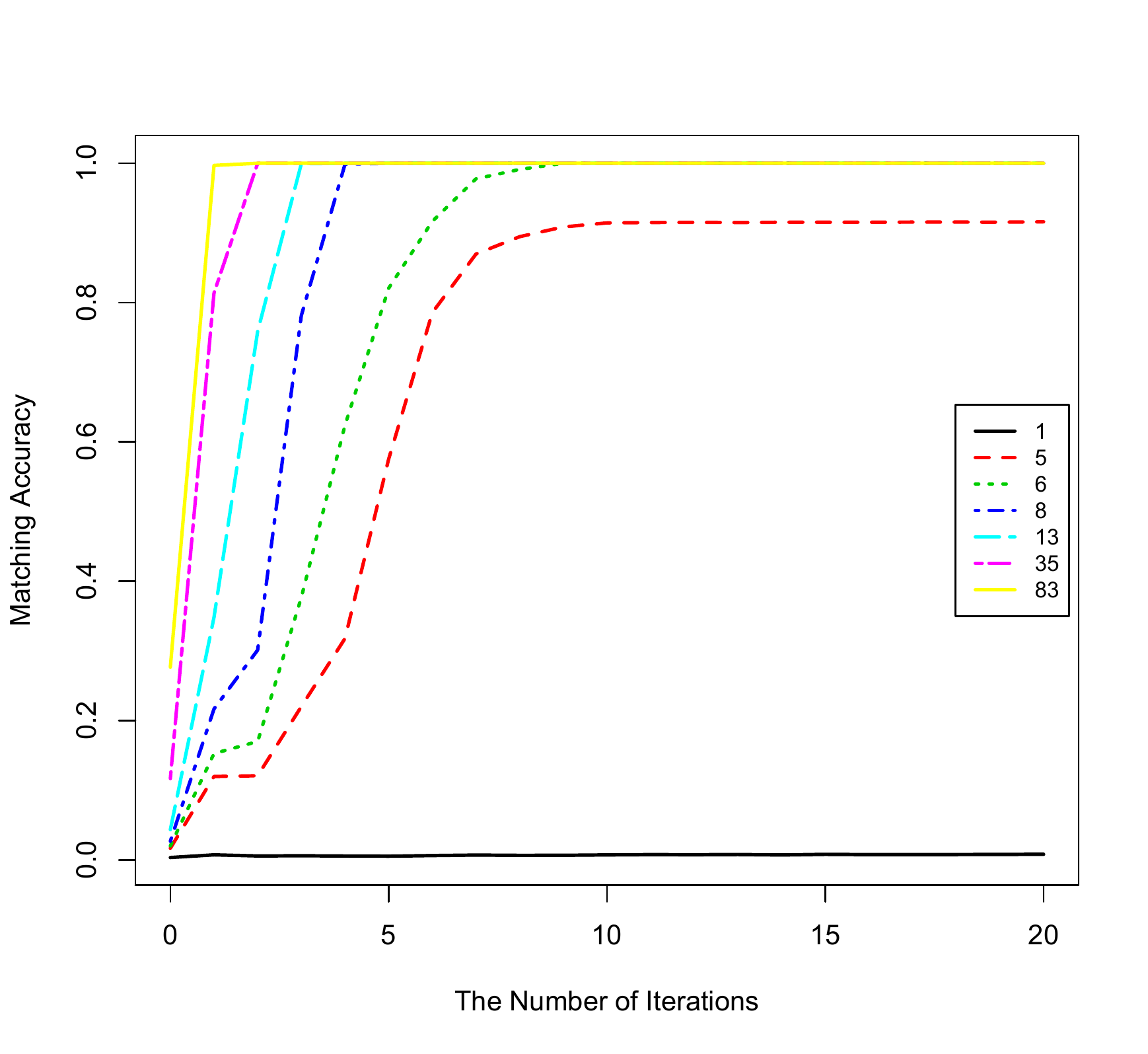}
    \caption{ SBM Average Accuracy}
    \label{fig:sbm}
  \end{subfigure}
  \begin{subfigure}[t]{.48\textwidth}
    \centering
    \includegraphics[width=\linewidth]{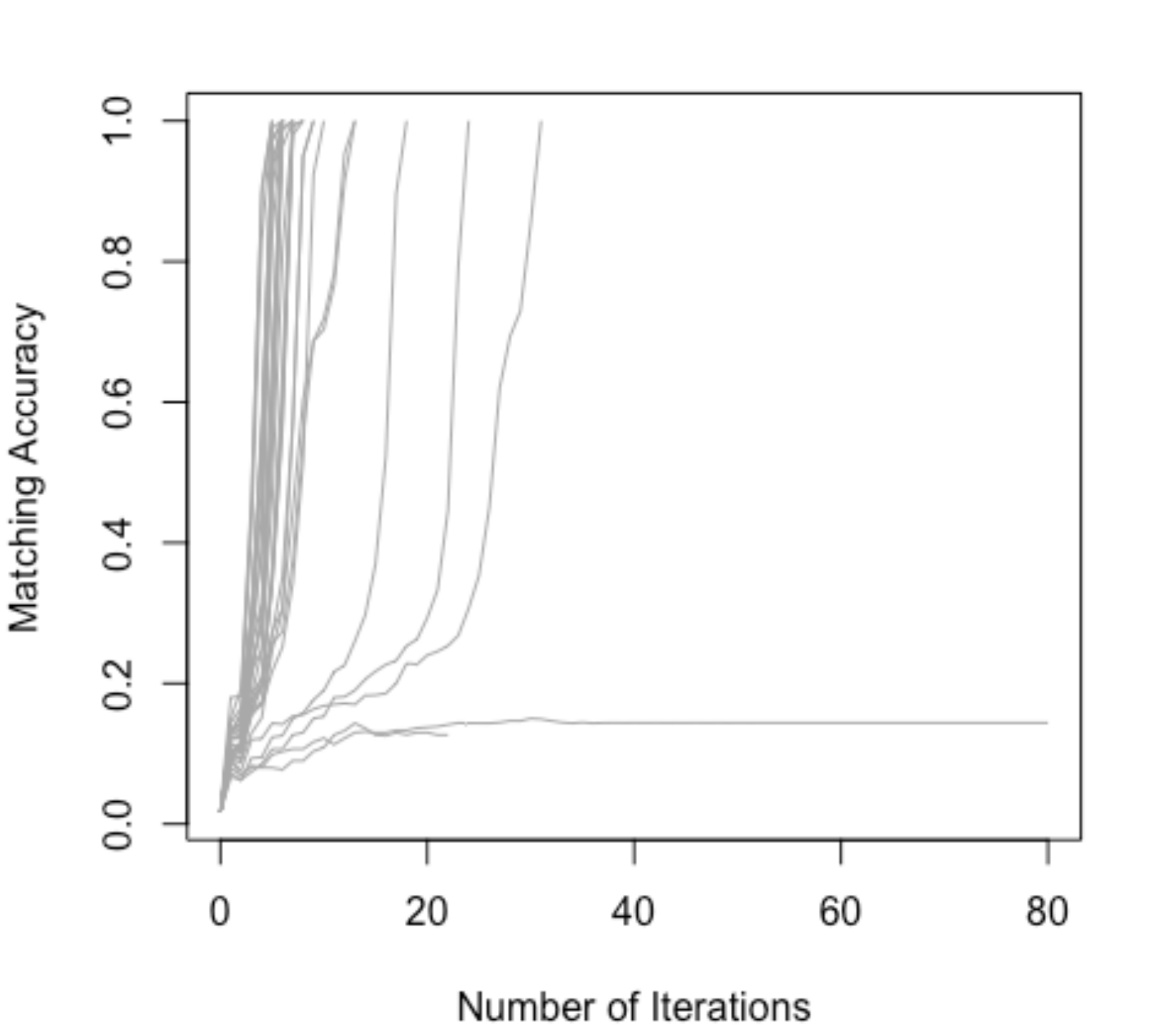}
    \caption{ SBM Accuracy trajectories}
    \label{fig:sbm_traj}
  \end{subfigure}\\
  \begin{subfigure}[t]{.48\textwidth}
    \centering
    \includegraphics[width=\linewidth]{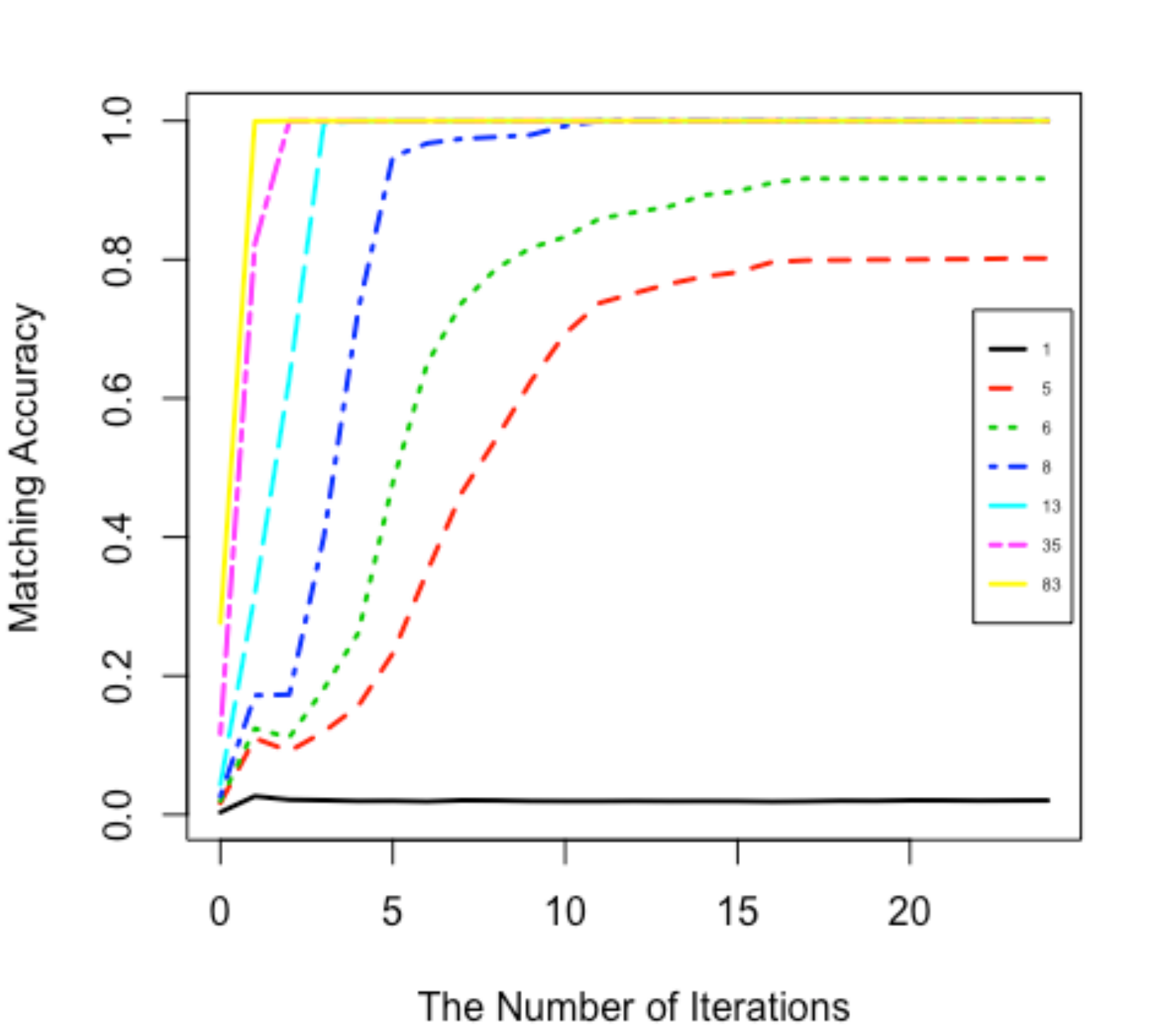}
    \caption{ RDPG}
    \label{fig:rdpg}
  \end{subfigure}\hfill
  \begin{subfigure}[t]{.48\textwidth}
    \centering
    \includegraphics[width=\linewidth]{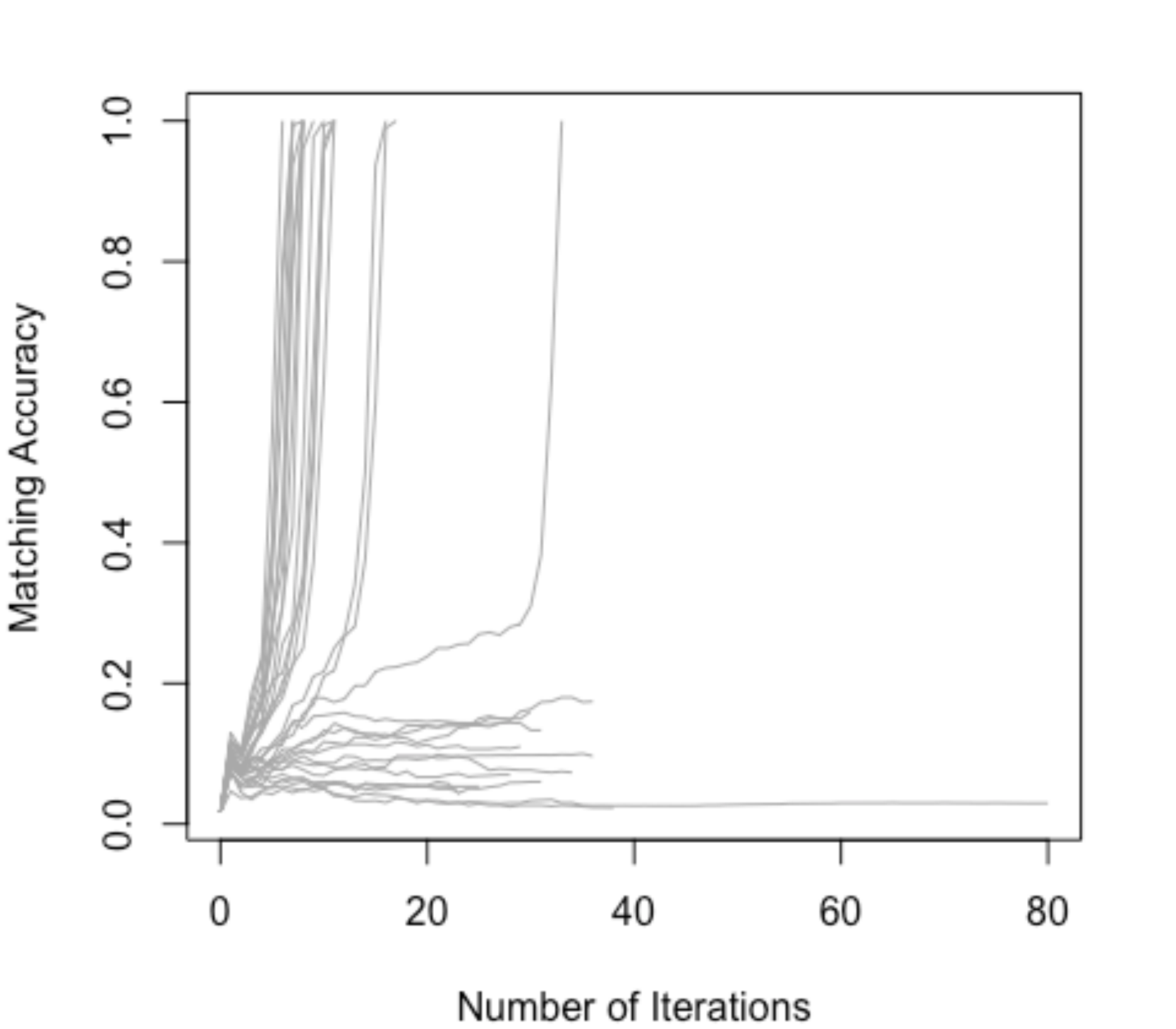}
    \caption{RDPG accuracy trajectories}
    \label{fig:rdpg_traj}
  \end{subfigure}
  \caption{The left column shows the average matching accuracy across iterations for the three settings with $\tr(D^0)\in\{1,5,6,8,13,35,83\}$ for 30 Monte Carlo replicates. The right columns shows how the accuracy of each of these thirty replicates behaves for $s=13$ for the HER setting and $s=5$ for the SBM and RDPG settings.}
\label{fig:traj}
\end{figure}

In the left panels of Figure~\ref{fig:traj}, we plot the matching accuracy (averaged across $m=30$ Monte Carlo replicates) across each iteration of the FAQ algorithm as $s$ varies for the three random graph settings.
The accuracy is plotted from iterations 0 to 20, with the accuracy at iteration 0 simply given by $s/n$.
Figure~\ref{fig:er} top, Figure~\ref{fig:sbm} middle, and Figure~\ref{fig:rdpg} bottom, shows the average accuracy for the HOM, SBM, and RDPG settings respectively.

In all settings, the performance clearly increases as $s$ increases.
For the homogeneous \ErdosRenyi{} case, there is a clear distinction between the performance at 8, 13, and 35 seeds.
For 8 and fewer seeds, FAQ fails to perform well, maintaining accuracy nearly equal to the initial accuracy.
In these cases, FAQ is terminating at a local maxima far from the identity.
For 35 and greater seeds, FAQ almost immediately converges to perfect accuracy as suggested by the theory.
At 13 seeds we see intermediate performance, taking more iterations until an intermediate doubly stochastic matrix with sufficiently large trace is reached, and good performance is achieved.
For the SBM and RDPG cases, we see somewhat similar behavior but even with only 5 seeds, FAQ achieves excellent but not perfect performance after 10 iterations.
Overall, the SBM setting is somewhat easier then the RDPG setting which is somewhat easier then the HOM setting.
This is likely due to the case that for these settings $\tr(\Lambda D^0 \Lambda P)$ will typically increase with $\tr(P)$ so that the structure $\Lambda$ is actually making the matching easier.

In the SBM and RDPG settings, the average accuracy with a small number of seeds settles between $0.6$ and $1$.
The right panels of Figure~\ref{fig:traj} shows the accuracy at each iteration for the 30 different Monte Carlo replicates. 
FAQ starts with $s=13$ for the HOM setting and $s=5$ for the SBM and RDPG settings.
In the HOM setting, all 30 replicates eventually achieve perfect performance, with most stopping after 4--7 iterations.
However, in the SBM case 2 of the 30 replicates end with less than $0.18$ accuracy and in the RDPG case, 13 of the 30 replicates end with less than $0.2$ accuracy (see Table~\ref{tab:traj_acc}).
This matches very well with out theoretical results.
In particular, we see that under some scenarios it may take a while to get above the threshold accuracy (given by the trace constraints in Theorems \ref{thm:hom} and \ref{thm:het}), but once it does, the algorithm will rapidly converge to perfect performance in one or two steps. 
Additionally, in the RDPG and SBM settings, the replicates that terminate at a local maximum all have markedly lower accuracy; this confirms the findings discussed in Section~\ref{sec:loc_max}.

\begin{table} \centering 
  \caption{Summary of performance for trajectories illustrated in Fig \ref{fig:sbm_traj} and \ref{fig:rdpg_traj}.} 
  \label{tab:traj_acc}
  \begin{tabular}{rll}
    \toprule
    & SBM with seeds 5& RDPG with seeds 5\\
    \midrule
perfect trajectories & 28 & 17  \\
other trajectories & 2 with $ \leq 15 \%$ & 13 with $ \leq 18 \%$ \\ 
    
    \bottomrule
  \end{tabular} 
\end{table} 

In addition to investigating the performance for a fixed number of nodes, we also wanted to investigate performance as the number of nodes in the graph grows.
In Figure~\ref{fig:er_n}, we plot the accuracy in the HOM setting as a function of the number of seeds and the number of nodes.
The initial matrix $D^0$ is constructed as described above with $s\in \{2, 3, \dotsc, 50\}$.
The total number of nodes in the network varies from $n=100$ to $700$, and $A,B\sim\CER(0.5 J_n, 0.5 J_n)$.
For each $n$, we sample $500$ graphs and computed the accuracy using the varying number of seeds.
From left to right, the color of each pixel indicates the average accuracies after iteration 1, iteration 2 and after the final iteration respectively.
Note, the scale of the y-axis (i.e., the number of seeds) changes in each panel.

While our theoretical results do not directly imply that below a certain trace threshold the FAQ algorithm will not perform well, Figure~\ref{fig:er_n} provides empirical evidence for the validity of this assertion.
Performance after the first iteration varies relatively smoothly with the number of seeds, but performance after the second iteration, and especially after the final iteration, has a somewhat sharp phase transition.
Below the phase transition, performance is effectively chance, while above the phase transition the true correspondence is perfectly recovered. 
The number of seeds required before perfect performance is achieved also grows slowly, with the transition rate possibly growing slower than the $\sqrt{n}$-rate for which our theory applies.

\begin{figure}
    \begin{subfigure}[t]{.3\textwidth}
    \centering
    \includegraphics[height=.18\textheight]{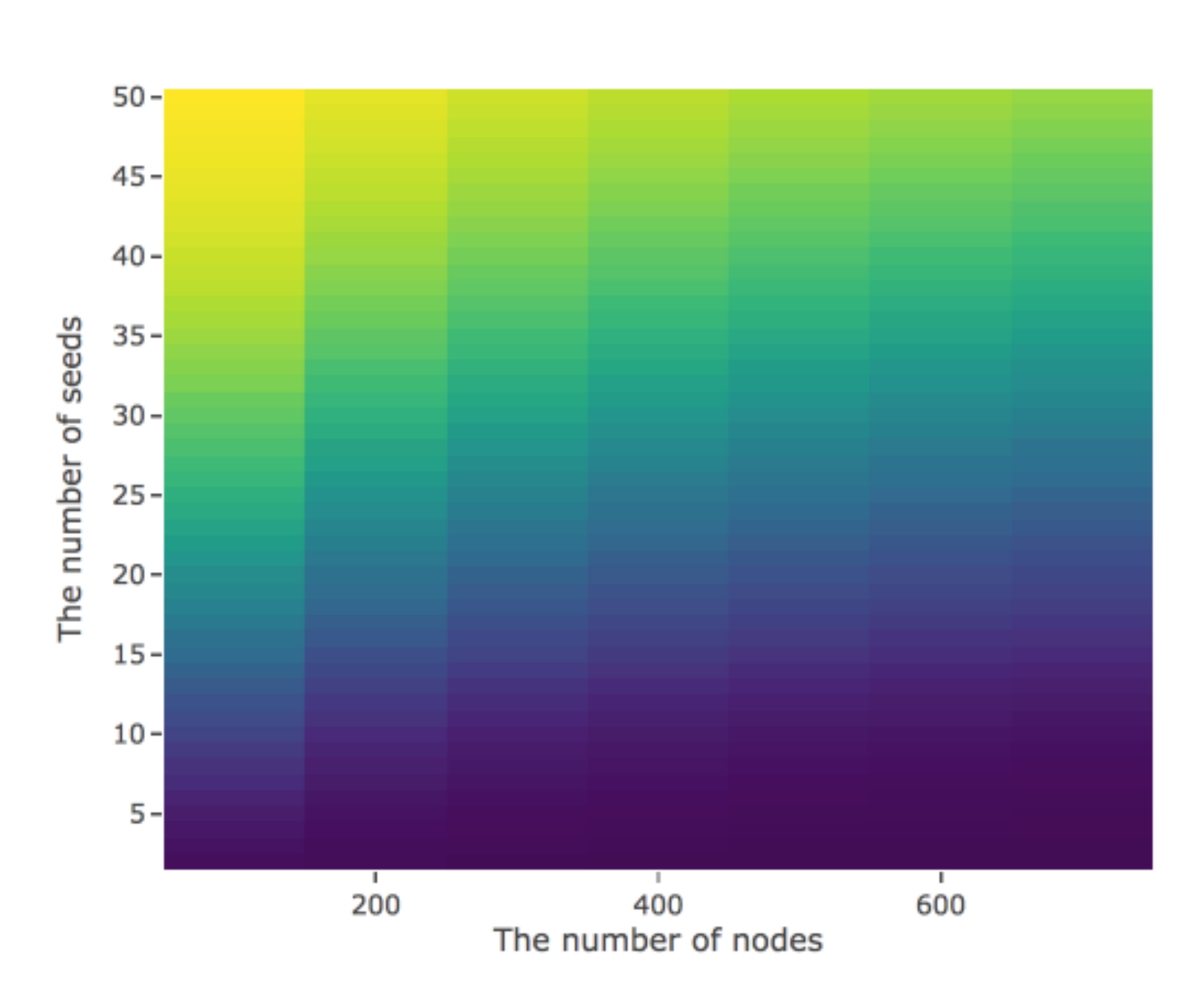}
    \caption{first iteration}
    \label{fig:er_n_first}
  \end{subfigure}
  \begin{subfigure}[t]{.3\textwidth}
    \centering
    \includegraphics[height=.18\textheight]{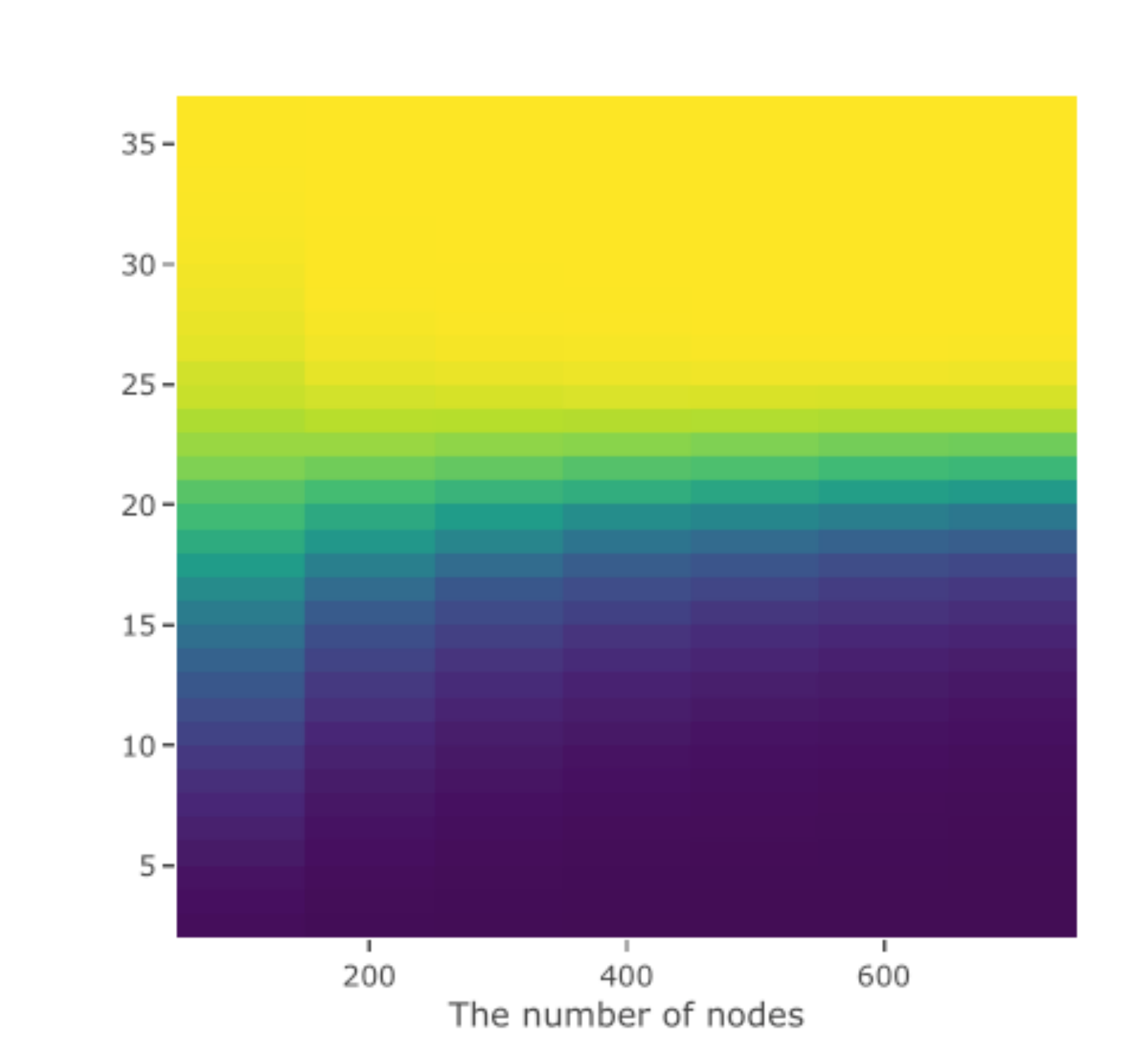}
    \caption{second iteration}
    \label{fig:er_n_sec}
  \end{subfigure}
  \begin{subfigure}[t]{.33\textwidth}
    \centering
    \includegraphics[height=.18\textheight]{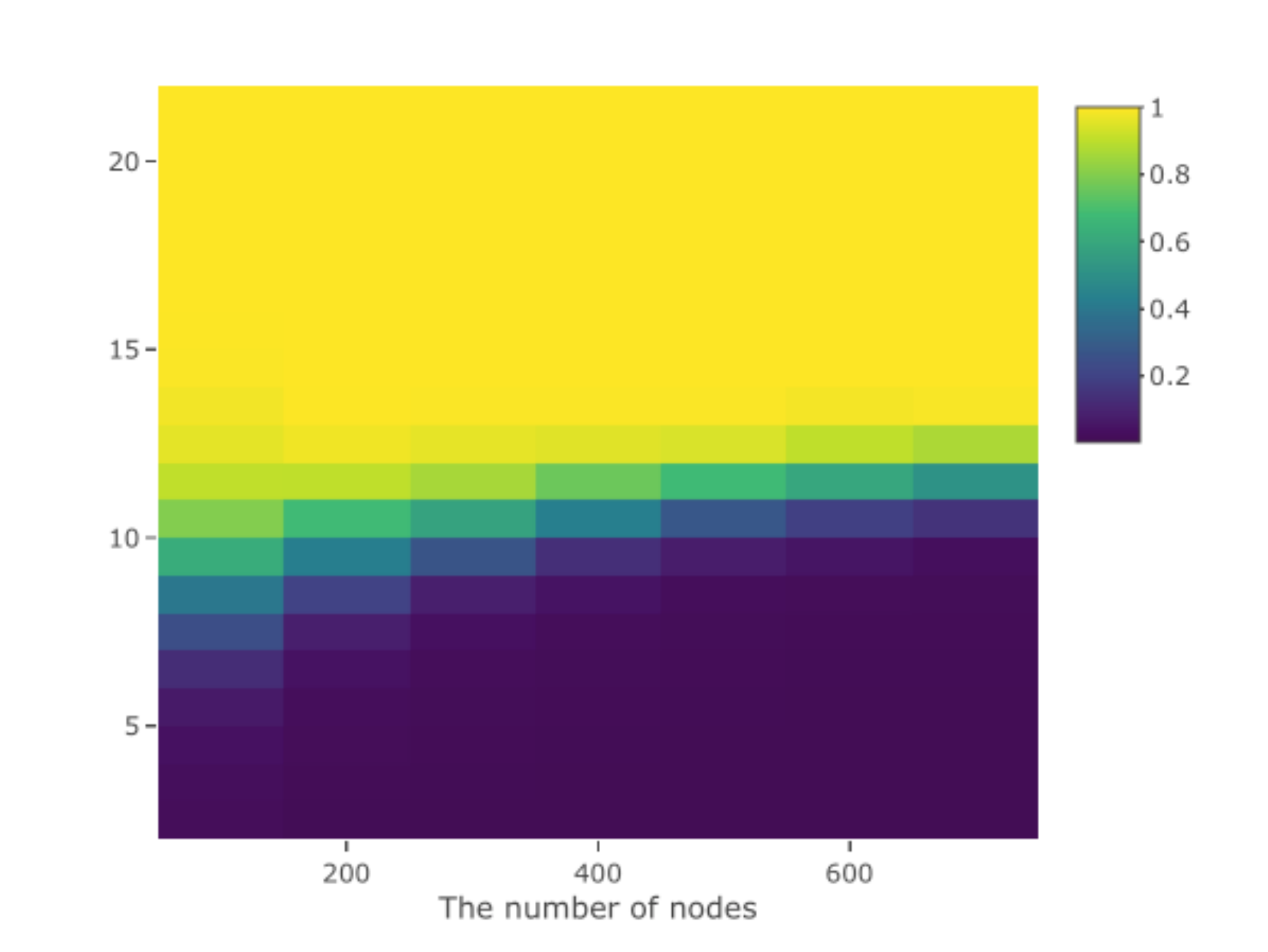}
    \caption{last iteration}
    \label{fig:er_n_last}
  \end{subfigure}
    \caption{Matching accuracy as a function of the number of seeds and the number of nodes for the homogeneous \ErdosRenyi{} setting with $A,B\sim \CER(0.5J_n, 0.5 J_n)$.
    The color of each pixel denotes the average accuracy across 500 replicates after the first, second, and last iteration, from left to right, respectively.}
    \label{fig:er_n}
\end{figure}

\subsection{Block alignment for $D^0$ and SBM}

As a final investigation, in the SBM model we will study the impact of differences between the partition of the vertex sets provided by $D^0$ and the partition of the vertices by SBM block membership.
As we noted above, matching in the SBM setting appears to be computationally easier than in the HOM setting, partially due to the fact that there is information about the true correspondence in the structure of $\Lambda$.
In the $K$-block SBM model, we consider the case that $D^0$ is formed by a $K$-part partition of the vertices that is consistent across graphs:
Specifically, for a partition $[n] = \eta_1 \sqcup \eta_2 \dotsb \sqcup \eta_K$, we set $D^0_{ij} = 1/|\eta_k|$ if $i,j$ are both in $\eta_k$ for some $k$ and $0$ otherwise.
This makes $\tr(D^0)=K$.
Suppose the blocks of the SBM are given by a different partition $\beta_1\sqcup  \beta_2 \sqcup \dotsb \sqcup \beta_K$.

We can characterize the disagreement between a $D^0$ constructed in this way and the block memberships of the vertices according to the SBM by computing the confusion matrix.
The confusion matrix for the two partitions is $\mathcal{C}\in \mathbb{N}^{K \times K}$ with $\mathcal{C}_{ij} = |\eta_i \cap \beta_j|$.
We define the disagreement between the two partitions as $d(\eta,\beta) = n-\argmax_{P\in \PM_k} \tr(\mathcal{C}P)$, the number of nodes in ``different'' partitions after relabeling the partitions to align as well as possible.
To study the impact of disagreement between $D^0$ and the SBM, we sampled partitions $\eta_1,\dotsc,\eta_m$ of $[n]$ which had confusion matrices fixed at $\frac{n-\delta}{K} I_K  + \frac{\delta}{K(K-1)}(J_K-I_K)$ for various choices of $\delta$.
Given $\delta\leq n-n/K$, the disagreement will be exactly $\delta/k$ for each part.

\begin{figure}
    \centering
    \includegraphics[width=.6\linewidth]{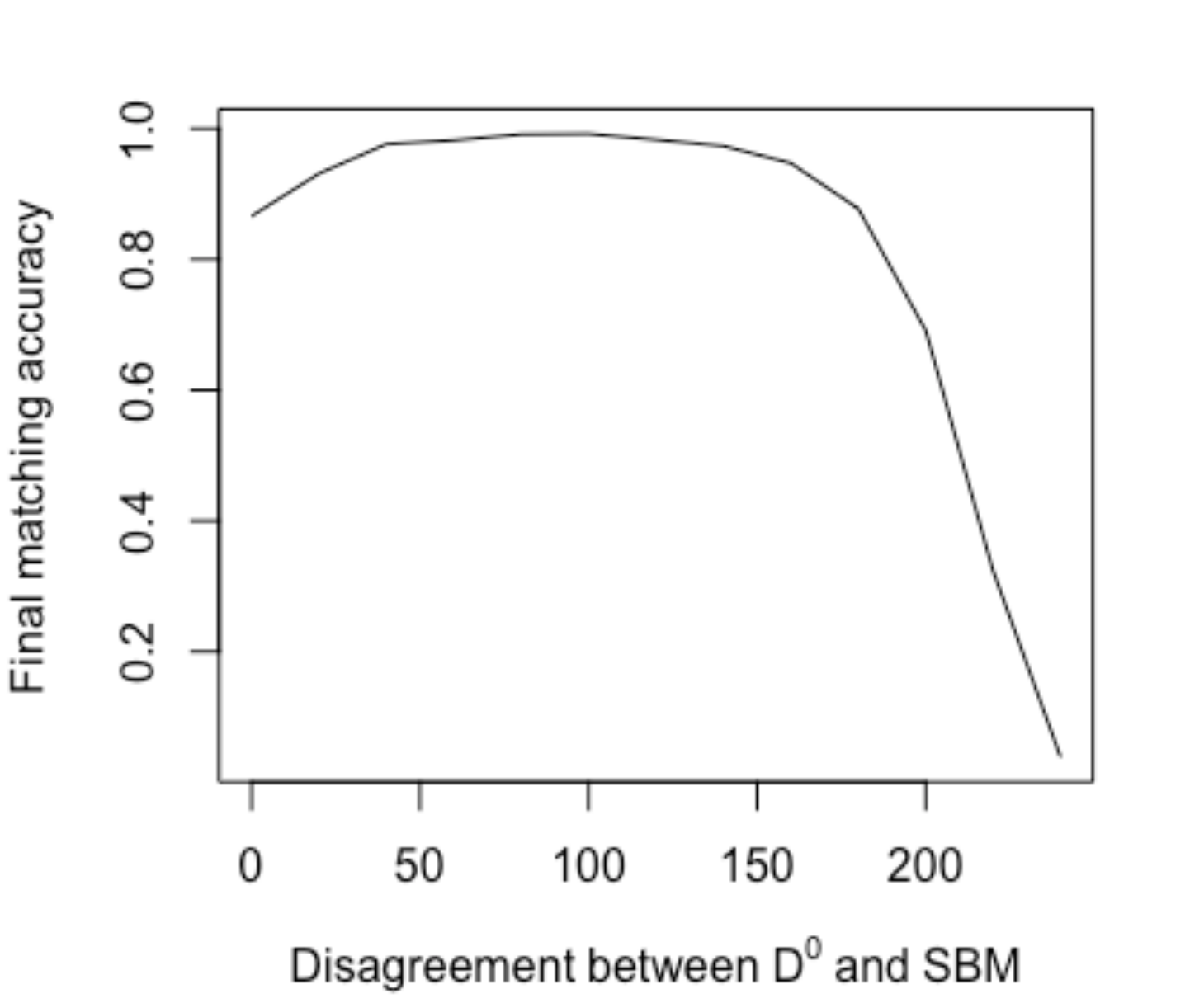}
    \caption{Final matching accuracy for 5 soft seeds, 5 blocks of SBM. The disagreements between SBM and $D^0$ within each blocks are $(0,4,8,\dotsc, 48)$, i.e., with total disagreements  $(0,20,40,60,80,100,120,140,160,180,200,$ $220,240)$.}
    \label{fig:acc_diss}
\end{figure}

We explored the impact of this disagreement for the case of $$A,B\sim \CER \left( \left(0.5 I_K + 0.1 (J_K-I_K)\right) \otimes J_{n/K}, 0.5 J_n \right).$$
For $K=5$, Figure~\ref{fig:acc_diss} shows the average matching accuracy, after the final iteration of FAQ, as a function of the number of disagreements, with $\delta\in \{0,20, \dotsc, 240\}$.
The average is based on 100 Monte Carlo replicates.
The number of disagreements within each block of $D^0$ ranged as $0,4,8,\dotsc, 48$, with $0,1,2,\dotsc, 12$ vertices being mapped to each disagreeing block in the SBM.
Between $\delta=0$ to $60$, the matching accuracy improves slightly as more disagreements are introduced between the partitions, before leveling off.
However, for $\delta>150$, the accuracy starts to drop and drops rapidly for $\delta>200$.

We postulate that as the number of disagreements increase initially, the vertices which disagree become easily matchable.
For disagreeing blocks, there are few nodes within each intersection of $D^0$ and SBM blocks.
These nodes can rapidly add to the trace in the first few iterations, making the problem easier.
When there are many disagreements, we can view $D^0$ as effectively reducing the matching problem to matching 5 smaller graphs with nearly the same structure for $\Lambda$. 
From this perspective, the four other $D^0$ do not provide enough information to help align within each block.

Table~\ref{tab:dis} shows a more detailed breakdown of the errors that occur for $\delta=0,60,200$, and $220$.
The second column breaks down the accuracy according to the percentage of replicates where perfect performance was achieved, along with the conditional accuracy given that perfect performance was not achieved.
The rate of perfect matchings at $\delta=60$ (93.5\%) is substantially better than when $\delta=0$ (84\%) but when the matching fails it tends to produce slightly more errors.
These results reinforce those from Figure~\ref{fig:sbm_traj}.

The remaining columns show, conditioned on an imperfect alignment, whether the errors occur within or between blocks for both $D^0$ and the SBM.
An error is within a block if the vertex gets matched to another vertex in the same block and otherwise it is between blocks.
As the number of disagreements increase, the errors for $D^0$ tends to be more between blocks while the errors for the SBM are all within block, until $\delta=220$.
This indicates that for $\delta\leq 200$, even if the alignment is incorrect the alignment of vertices to the correct SBM block remains perfect.
At $\delta=220$, accuracy is even lower.
Furthermore, even though the majority of errors for SBM remain within the blocks, errors between SBM blocks begin to appear.
Note the errors for $D^0$ between blocks are large also. 
Note that for blocks of this size, a random alignment would produce on average $299$ errors of which $\approx 59$ would be within block and $\approx 240$ would be between blocks.
Hence, for $\delta=200$ and $220$, the number of disagreements for $D^0$ is close to chance.

\begin{table} \centering 
    \caption{Average matching and mismatching accuracy over mismatched replicates. The first column indicates the number of disagreements and the second column shows both the percentage of perfect and imperfect alignments, along with the conditional mean of the accuracy given the alignment was imperfect.
    The remaining columns show how errors distribute within versus between blocks for $D^0$ and the SBM.
    } 
    \label{tab:dis}
    \begin{tabular}{@{\extracolsep{5pt}} r|c|cc|cc} 
        \toprule
        $\delta$ & correct & \multicolumn{4}{c}{errors}\\
         & alignment & \multicolumn{2}{c|}{$D^0$}& \multicolumn{2}{c}{SBM}\\
        & & within & between  & within & between \\\midrule
        $0$ & \thead{$16 \% \ 37.5$ vertices \\ $84 \%$ perfect}  & $262$ & $0$ & $262$ & $0$ \\  
        $60$ & \thead{$6.5 \% \ 41.6$  vertices \\ $93.5 \%$ perfect} & $169$ & $89$ & $258$ & $0$ \\ 
        $200$ & \thead{$42.5 \% \ 32.7$ vertices \\ $57.5 \%$ perfect } & $61$ & $206$ & $267$ & $0$ \\
        $220$ & \thead{$73 \% \ 22.3$ vertices \\ $27 \%$ perfect } & $59$ & $219$ & $242$ & $35$ \\ \bottomrule
    \end{tabular}
\end{table}

\section{Discussion}

By initializing the FAQ algorithm at a doubly stochastic matrix based on possibly noisy prior information, the algorithm can be shown to perform well provided there is enough information.
Indeed, we show that under a broad set of regimes, with high probability the algorithm will converge in two steps to the true correspondence under the correlated \ErdosRenyi{} model.
The quantity of sufficient prior information is directly related to the trace of the initial matrix as well as how that initial matrix relates to the structure of the model parameters.
Our simulations illustrate various aspects of the theory and explore how the structure of the model and the initial matrix impact performance.

One important implication of our work is that if the assumptions of our theorems hold, there will be no local optima near the true correspondence.
Furthermore, it is likely that the non-global local optima will have a graph matching objective function Eq.~\eqref{eq:gm_intro}, which is much larger than that of the global optimum.
Exploiting this fact, it may be possible to construct a test to determine whether the global optimum has been reached by using random restarts.

In Section~\ref{sec:construct_ds}, we propose the use a similarity matrix comparing nodes to construct the initial matrix.
Note that such a similarity matrix $S$ could also be directly incorporated into the objective function by modifying Eq.~\eqref{eq:gm_relax} as
\begin{equation}
  \argmax_{D \in \DSM_{n}} \tr(ADBD^T) + \tr(SD)
\end{equation}
which has gradient $ADB+S$.
Note that the hard seeding objective function, Eq.~\eqref{eq:sgm} is precisely of this form, with $S=B_{12}^T A_{12}$.
Whether $S$ is random or fixed, the proofs of our theorems can likely be modified by incorporating the relative magnitudes of the diagonal of $S$ as compared to its off-diagonal.

We do not consider non-identically distributed graphs or graphs of different sizes.
These pathologies do present substantial additional challenges but we refer the reader to other work on adapting FAQ to these setting using centering~\cite{lyzinski2017graph} and padding~\cite{Sussman2018-lb}.
These techniques are easily combined with soft seeding.
A more challenging future direction is to understand what prior information should and should not be used in constructing the matrix $D^0$.
This may be viewed as trying to learn the kernel to construct the similarity matrix $S$.
Using known hard seeds, we may be able to determine which features of the covariates are most useful for determining the true correspondence.
Techniques from entity resolution \cite{Getoor2012-es} can be exploited for this task.
A deeper understanding of how the relationship between $D^0$, $\Lambda$ and $R$ impacts performance is also currently elusive but would likely aid in determining what resources to use to construct $D^0$ when estimating $\Lambda$ may be possible.

\section{Acknowledgments}
This work was partially supported by a grant from MIT
Lincoln Laboratories and the Department of Defense, and the MAA
program of the Defense Advanced Research Projects Agency (DARPA) via award number FA8750-18-0035.
The views and conclusions contained herein are those of the authors and should not be interpreted as necessarily representing the official policies or endorsements, either expressed or implied, of the Air Force Research Laboratory and DARPA, or the U.S.Government.

 \bibliographystyle{plain}
\bibliography{../biblio.bib}

\begin{thebibliography}{10}

\bibitem{alex}
Y.~Aflalo, A.~Bronstein, and R.~Kimmel.
\newblock On convex relaxation of graph isomorphism.
\newblock {\em Proceedings of the National Academy of Sciences},
  112(10):2942--2947, 2015.

\bibitem{ConteReview}
D.~Conte, P.~Foggia, C.~Sansone, and M.~Vento.
\newblock Thirty years of graph matching in pattern recognition.
\newblock {\em International Journal of Pattern Recognition and Artificial
  Intelligence}, 18(03):265--298, 2004.

\bibitem{Durante2018-pl}
D.~Durante and D.~B. Dunson.
\newblock Bayesian inference and testing of group differences in brain
  networks.
\newblock {\em Bayesian analysis}, 13(1):29--58, March 2018.

\bibitem{Egozi2013-jh}
A.~Egozi, Y.~Keller, and H.~Guterman.
\newblock A probabilistic approach to spectral graph matching.
\newblock {\em IEEE transactions on pattern analysis and machine intelligence},
  35(1):18--27, January 2013.

\bibitem{Emmert-Streib2016-st}
F.~Emmert-Streib, M.~Dehmer, and Y.~Shi.
\newblock Fifty years of graph matching, network alignment and network
  comparison.
\newblock {\em Information sciences}, 346--347:180--197, 2016.

\bibitem{feizi2016spectral}
S.~Feizi, G.~Quon, M.~Recamonde-Mendoza, M.~M{\'e}dard, M.~Kellis, and
  A.~Jadbabaie.
\newblock Spectral alignment of networks.
\newblock {\em arXiv preprint arXiv:1602.04181}, 2016.

\bibitem{finke1987quadratic}
G.~Finke, R.~E. Burkard, and F.~Rendl.
\newblock Quadratic assignment problems.
\newblock {\em North-Holland Mathematics Studies}, 132:61--82, 1987.

\bibitem{FAP}
D.E. Fishkind, S.~Adali, H.~G. Patsolic, L.~Meng, V.~Lyzinski, and C.E. Priebe.
\newblock Seeded graph matching.
\newblock {\em arXiv preprint arXiv:1209.0367}, 2017.

\bibitem{FW}
M.~Frank and P.~Wolfe.
\newblock An algorithm for quadratic programming.
\newblock {\em Naval Research Logistics Quarterly}, 3(1-2):95--110, 1956.

\bibitem{Getoor2012-es}
L.~Getoor and A.~Machanavajjhala.
\newblock Entity resolution: Theory, practice \& open challenges.
\newblock {\em Proceedings of the VLDB Endowment International Conference on
  Very Large Data Bases}, 5(12):2018--2019, August 2012.

\bibitem{Jonker1987-bs}
R.~Jonker and A.~Volgenant.
\newblock A shortest augmenting path algorithm for dense and sparse linear
  assignment problems.
\newblock {\em Computing}, 38(4):325--340, December 1987.

\bibitem{kazemi2015growing}
E.~Kazemi, S.~H. Hassani, and M.~Grossglauser.
\newblock Growing a graph matching from a handful of seeds.
\newblock {\em Proceedings of the VLDB Endowment}, 8(10):1010--1021, 2015.

\bibitem{kim}
J.~H. Kim, B.~Sudakov, and V.~H. Vu.
\newblock On the asymmetry of random regular graphs and random graphs.
\newblock {\em Random Structures and Algorithms}, 21:216--224, 2002.

\bibitem{Kivela2014-ij}
M.~Kivel{\"a}, A.~Arenas, M.~Barthelemy, J.~P. Gleeson, Y.~Moreno, and M.~A.
  Porter.
\newblock Multilayer networks.
\newblock {\em Journal of Complex Networks}, 2(3):203--271, September 2014.

\bibitem{hungarian}
H.~W. Kuhn.
\newblock The {H}ungarian method for the assignment problem.
\newblock {\em Naval Research Logistic Quarterly}, 2:83--97, 1955.

\bibitem{Levin2017-uk}
K.~Levin, A.~Athreya, M.~Tang, V.~Lyzinski, and C.~E. Priebe.
\newblock A central limit theorem for an omnibus embedding of random dot
  product graphs.
\newblock {\em arXiv preprint, arXiv:1705.09355}, 2017.

\bibitem{lyzinski2016information}
V.~Lyzinski.
\newblock Information recovery in shuffled graphs via graph matching.
\newblock {\em IEEE Transactions on Information Theory}, 64(5):3254--3273,
  2018.

\bibitem{lyzinski2014seeded}
V.~Lyzinski, S.~Adali, J.~T. Vogelstein, Y.~Park, and C.~E. Priebe.
\newblock Seeded graph matching via joint optimization of fidelity and
  commensurability.
\newblock {\em arXiv preprint arXiv:1401.3813}, 2014.

\bibitem{Lyzinski2016-kp}
V~Lyzinski, D~E Fishkind, M~Fiori, J~T Vogelstein, C~E Priebe, and G~Sapiro.
\newblock Graph matching: Relax at your own risk.
\newblock {\em IEEE transactions on pattern analysis and machine intelligence},
  38(1):60--73, 2016.

\bibitem{lyzinski2017graph}
V.~Lyzinski and D.~L. Sussman.
\newblock Matchability of heterogeneous networks pairs.
\newblock {\em arXiv preprint, arXiv:1705.02294}, 2017.

\bibitem{Qiao2018-xy}
Zihuan Qiao, Daniel~L Sussman, Vince Lyzinski, Park Youngser, and Joshua
  Agterberg.
\newblock {iGraphMatch}, May 2018.

\bibitem{shirani2017seeded}
F.~Shirani, S.~Garg, and E.~Erkip.
\newblock Seeded graph matching: Efficient algorithms and theoretical
  guarantees.
\newblock In {\em Signals, Systems, and Computers, 2017 51st Asilomar
  Conference on}, pages 253--257. IEEE, 2017.

\bibitem{Sinkhorn1967-eo}
R.~Sinkhorn and P.~Knopp.
\newblock Concerning nonnegative matrices and doubly stochastic matrices.
\newblock {\em Pacific Journal of Mathematics}, 21(2):343--348, 1967.

\bibitem{Sussman2018-lb}
D.~L. Sussman, V.~Lyzinski, Y.~Park, and C.~E. Priebe.
\newblock Matched filters for noisy induced subgraph detection.
\newblock {\em arXiv preprint, arXiv:1803.02423}, 2018.

\bibitem{FAQ}
J.~T. {Vogelstein}, J.~M. {Conroy}, V.~{Lyzinski}, L.~J. {Podrazik}, S.~G.
  {Kratzer}, E.~T. {Harley}, D.~E. {Fishkind}, R.~J. {Vogelstein}, and C.~E.
  {Priebe}.
\newblock {Fast Approximate Quadratic Programming for Graph Matching}.
\newblock {\em PLoS ONE}, 10(04), 2014.

\bibitem{Vogelstein2015-xh}
J.~T. Vogelstein and C.~E. Priebe.
\newblock Shuffled graph classification: Theory and connectome applications.
\newblock {\em Journal of Classification}, 32(1):3--20, 2015.

\bibitem{Wang2010-og}
F.~Wang, P.~Li, and A.~C. Konig.
\newblock Learning a {Bi-Stochastic} data similarity matrix.
\newblock In {\em 2010 {IEEE} International Conference on Data Mining}, pages
  551--560. ieeexplore.ieee.org, December 2010.

\bibitem{yartseva2013performance}
L.~Yartseva and M.~Grossglauser.
\newblock On the performance of percolation graph matching.
\newblock In {\em Proceedings of the first ACM conference on Online social
  networks}, pages 119--130. ACM, 2013.

\end{thebibliography}

\appendix

\section{Propositions}
We first supplement two important propositions which will be used in the following sections.

The first is a strong concentration inequality for functions of independent Bernoulli random variables.
\begin{proposition}[\cite{kim}]
  \label{prop:kim}
  Let $X_1, X_2,..., X_m$ be a sequence of independent Bernoulli random variables with parameters $p_1, p_2, ..., p_m$. Let $f: \{0,1 \}^m \rightarrow \mathbb{R}$ be a function of $X_1, X_2,..., X_m$ and by changing any $X_i$ to $1-X_i$, f at most change
  \[ M=\sup_{i\in [m]} \sup_{(X_1, X_2,..., X_m)}  |f(X_1,..., X_{i}, ...., X_{m})-  f(X_1,..., 1-X_{i}, ...., X_{m}) |  \]
  Let $\gamma^2=M^2\sum_{i=1}^m p_{i}(1-p_i)$. 
  Then $\Pr\left[ \ |f-\mathbb{E}(f)|\geq t\gamma \ \right] \leq 2e^{-\frac{t^2}{4}}$ for all $0<t<\frac{2\gamma}{M}$.
\end{proposition}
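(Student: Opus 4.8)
The plan is to prove this via the standard Doob martingale together with a variance-sensitive (Bennett-type) bound on the conditional moment generating function, calibrated so that the range $0<t<2\gamma/M$ and the exponent $t^2/4$ emerge exactly. First I would introduce the filtration $\mathcal{F}_i = \sigma(X_1,\dots,X_i)$ and the martingale $Y_i = \Ex[f \mid \mathcal{F}_i]$, so that $Y_0 = \Ex[f]$ and $Y_m = f$, with differences $D_i = Y_i - Y_{i-1}$. Conditioning on $\mathcal{F}_{i-1}$, the difference $D_i$ is a function of $X_i$ alone: writing $g_i(x) = \Ex[f \mid \mathcal{F}_{i-1}, X_i = x]$, the quantity $a_i := g_i(1) - g_i(0)$ satisfies $|a_i| \le M$, since it is an expectation over $X_{i+1},\dots,X_m$ of terms each obtained by flipping the single coordinate $X_i$, and each such term is bounded by $M$ by hypothesis. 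Consequently, given $\mathcal{F}_{i-1}$, the variable $D_i$ is a centered two-point random variable taking the values $(1-p_i)a_i$ and $-p_i a_i$ with probabilities $p_i$ and $1-p_i$, whose conditional variance equals $p_i(1-p_i)a_i^2 \le M^2 p_i(1-p_i)$.

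Next I would bound the conditional MGF. Since $D_i \le M$ almost surely and the map $z \mapsto (e^{\lambda z} - 1 - \lambda z)/z^2$ is nondecreasing on $\mathbb{R}$ for $\lambda>0$, the Bennett inequality gives $\log \Ex[e^{\lambda D_i}\mid \mathcal{F}_{i-1}] \le \mathrm{Var}(D_i \mid \mathcal{F}_{i-1})\,(e^{\lambda M} - 1 - \lambda M)/M^2 \le p_i(1-p_i)(e^{\lambda M} - 1 - \lambda M)$; crucially this retains the variance factor $p_i(1-p_i)$ rather than the cruder $p_i$. Applying the elementary estimate $e^u - 1 - u \le u^2$, valid for $0 \le u \le 1$, with $u = \lambda M$, this simplifies to $\log \Ex[e^{\lambda D_i} \mid \mathcal{F}_{i-1}] \le \lambda^2 M^2 p_i(1-p_i)$ for all $0 < \lambda \le 1/M$. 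Iterating the tower property over $i = m, m-1, \dots, 1$ then yields the global sub-Gaussian bound $\Ex[e^{\lambda(f - \Ex f)}] \le \exp\!\big(\lambda^2 M^2 \sum_{i} p_i(1-p_i)\big) = \exp(\lambda^2 \gamma^2)$ throughout the same range of $\lambda$.

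Finally, a Chernoff bound gives $\Pr[f - \Ex f \ge t\gamma] \le \exp(\lambda^2 \gamma^2 - \lambda t \gamma)$; choosing $\lambda = t/(2\gamma)$ makes the exponent equal to $-t^2/4$, and the admissibility constraint $\lambda \le 1/M$ is exactly $t \le 2\gamma/M$. Applying the identical argument to $-f$, which has the same bounded-difference constant $M$ and hence the same $\gamma$, controls the lower tail, and a union bound over the two tails produces the stated factor of $2$. I expect the main obstacle to be the second step: one must carry the Bernoulli variance $p_i(1-p_i)$ correctly through the MGF estimate (the naive bound $\log(1+x)\le x$ loses the $1-p_i$ factor and weakens $\gamma^2$ to $M^2\sum_i p_i$), and then calibrate the elementary inequality $e^u-1-u\le u^2$ so that both the exponent $t^2/4$ and the validity range $t<2\gamma/M$ come out precisely as stated, without stray constants.
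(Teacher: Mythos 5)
Your proof is correct, and there is nothing in the paper to compare it against step by step: the paper states Proposition~\ref{prop:kim} in the appendix as an imported result, cited from the reference \cite{kim}, and gives no proof of its own. Your route --- the Doob martingale $Y_i=\Ex[f\mid X_1,\dots,X_i]$; the observation that, conditionally on $\mathcal{F}_{i-1}$, the increment $D_i$ is a centered two-point variable with values $(1-p_i)a_i$ and $-p_ia_i$ and conditional variance $p_i(1-p_i)a_i^2\le M^2p_i(1-p_i)$; the Bennett-type estimate $\log\Ex[e^{\lambda D_i}\mid\mathcal{F}_{i-1}]\le p_i(1-p_i)\left(e^{\lambda M}-1-\lambda M\right)$, which uses only the one-sided bound $D_i\le M$ (valid here since $|D_i|\le|a_i|\le M$) together with the standard monotonicity of $z\mapsto(e^{\lambda z}-1-\lambda z)/z^2$; the calibration $e^u-1-u\le u^2$ on $0\le u\le 1$; and the Chernoff choice $\lambda=t/(2\gamma)$, whose admissibility $\lambda\le 1/M$ is precisely the stated range $t<2\gamma/M$ and whose exponent is exactly $-t^2/4$ --- checks out at every stage, including the two-sided bound by symmetry (applying the argument to $-f$, which has the same $M$ and $\gamma$) and the factor of $2$ from the union bound. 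This is essentially the standard martingale/MGF proof of variance-sensitive bounded-difference inequalities of this type, so in effect you have supplied the proof that the paper delegates to the literature. Your closing diagnostic is also exactly right: the point of the result, and the reason the paper needs it rather than McDiarmid's inequality, is that $\gamma^2$ carries the Bernoulli variances $p_i(1-p_i)$, which is what makes the exponents in Lemma~\ref{lem:main} and Theorems~\ref{thm:hom} and \ref{thm:het} scale with $\epsilon=\max_{i,j}3\Lambda_{ij}(1-\Lambda_{ij})+2R_{ij}$ rather than with the raw number of edge variables; the conditional-variance bookkeeping (using that $\Ex[D_i\mid\mathcal{F}_{i-1}]=0$ so the conditional second moment \emph{is} the conditional variance) is the one step where a careless argument would silently degrade $\gamma^2$ to $M^2\sum_i p_i$.
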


To employ Proposition~\ref{prop:kim}, we supplement with this result which states that a pair of dependent Bernoulli random variables can always be written as a function of three independent Bernoulli random variables.
We also provide a relatively loose but convenient upper bound on the variance of these independent Bernoullis.
\begin{proposition}
  \label{prop:bibern}
  If $X,Y$ are marginally Bernoulli random variables with parameter $\Lambda$ and correlation $\rho$, then the bivariate Bernoulli distribution of $(X,Y)$ is as following
  
    \begin{center}
      \begin{tabular}{c|cc|c}
        \toprule
        $(X,Y) $ & $X=1$ & $X=0$& Total\\
        \midrule
        $Y=1$ & $\Lambda[\Lambda+\rho(1-\Lambda)]$ &  $(1-\Lambda)\Lambda(1-\rho)$ &  $\Lambda$\\
        $Y=0$ & $\Lambda(1-\rho)(1-\Lambda)$ & $(1-\Lambda+\Lambda \rho)(1-\Lambda)$& $1-\Lambda$\\
        \midrule
        Total & $\Lambda$ & $1-\Lambda$ & 1\\\bottomrule
      \end{tabular}
    \end{center}
  
  Let $Z_0, Z_1$ and $Z_2$ be independent Bernoulli random variables with $Z_0 \sim \mathrm{Bern}(\Lambda)$, $Z_1\sim \mathrm{Bern} \left(\Lambda(1-\rho)\right)$ and $Z_2\sim \mathrm{Bern}\left(\mathrm{\Lambda+\rho(1-\Lambda)}\right)$. Then $$(X,Y) \overset{d}{\sim} (Z_0, (1-Z_0)Z_1+Z_0Z_2)$$ Moreover, an upper bound for the variance  of the sum of $Z_0, Z_1$ and $Z_2$ is
  \begin{flalign*}
  \mathrm{Var}(Z_{0})+\mathrm{Var}(Z_{1})+ \mathrm{Var}(Z_{2}) & 
 \leq 3\Lambda(1-\Lambda)+2\rho.
  \end{flalign*}    

\end{proposition}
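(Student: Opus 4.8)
The plan is to proceed in two stages: first establish the joint law of $(X,Y)$ and verify the three-Bernoulli representation, then bound the sum of variances.

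First I would pin down the joint distribution table directly from the two given constraints, the marginals and the correlation. Since $X$ and $Y$ are each $\mathrm{Bern}(\Lambda)$, both have variance $\Lambda(1-\Lambda)$, so $\mathrm{Cov}(X,Y) = \rho\Lambda(1-\Lambda)$ and hence
\[
  \Pr(X=1,Y=1) = \Ex[XY] = \mathrm{Cov}(X,Y) + \Lambda^2 = \Lambda[\Lambda + \rho(1-\Lambda)].
\]
The remaining three cells are then forced by the marginal constraints $\Pr(X=1)=\Pr(Y=1)=\Lambda$, reproducing exactly the displayed table. To verify the representation $(X,Y) \overset{d}{\sim} (Z_0, (1-Z_0)Z_1 + Z_0 Z_2)$, I would identify $Z_0$ with $X$ and check the two conditional laws of the second coordinate: when $Z_0=1$ it equals $Z_2$, giving $\Pr(Y=1\mid X=1) = \Lambda + \rho(1-\Lambda)$, and when $Z_0=0$ it equals $Z_1$, giving $\Pr(Y=1\mid X=0) = \Lambda(1-\rho)$. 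Multiplying by the marginals of $X$ recovers the top row of the table, a one-line computation confirms the induced marginal of $Y$ is again $\Lambda$, and since the full joint table is matched the correlation is automatically $\rho$.

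For the variance bound I would avoid expanding $p_i(1-p_i)$ term by term and instead use the perturbation identity
\[
  p(1-p) = \Lambda(1-\Lambda) + (p-\Lambda)(1 - p - \Lambda).
\]
Since $Z_0\sim\mathrm{Bern}(\Lambda)$ contributes exactly $\Lambda(1-\Lambda)$, it remains to control $Z_1$ and $Z_2$. The key observation is that both parameters lie within $\rho$ of $\Lambda$: indeed $|\Lambda(1-\rho)-\Lambda| = \Lambda\rho \le \rho$ and $|(\Lambda+\rho(1-\Lambda))-\Lambda| = \rho(1-\Lambda) \le \rho$. Combining this with the crude bound $|1-p-\Lambda|\le 1$ yields $\mathrm{Var}(Z_1),\mathrm{Var}(Z_2) \le \Lambda(1-\Lambda)+\rho$, and summing the three variances gives $3\Lambda(1-\Lambda)+2\rho$.

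The computation is elementary throughout. The only mild subtlety, and the step I would flag as the place to be careful, is choosing the perturbation identity for $p(1-p)$ so that the final estimate comes out in the clean form $3\Lambda(1-\Lambda)+2\rho$ rather than the messier expression one obtains by expanding each variance directly; all the validity constraints on the parameters (that $\Lambda(1-\rho)$ and $\Lambda+\rho(1-\Lambda)$ are genuine probabilities) follow immediately from $\rho\in[0,1]$.
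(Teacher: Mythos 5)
Your proof is correct. The paper states Proposition~\ref{prop:bibern} without proof, treating it as a routine verification, and your argument is exactly the one being left implicit: the covariance $\rho\Lambda(1-\Lambda)$ pins down $\Pr(X=1,Y=1)$, the marginals force the remaining cells, conditioning on $Z_0$ matches the table cell by cell, and the identity $p(1-p)=\Lambda(1-\Lambda)+(p-\Lambda)(1-p-\Lambda)$ together with $|p_i-\Lambda|\leq\rho$ for $i=1,2$ yields the bound $3\Lambda(1-\Lambda)+2\rho$ cleanly (the paper's model has $R\in[0,1]^{n\times n}$, so your standing assumption $\rho\in[0,1]$ is consistent with how the proposition is used).
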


\section{Main Lemma}

In this section we prove a generic lemma bounding the probability that a given permutation will have larger directional derivative than another with larger trace.
This Lemma has a number of assumptions that we will later argue are reasonable in a number of situations.

\begin{lemma} \label{lem:main}
Let $D$ be a doubly stochastic matrix with $\tr(D)=m$, $P$ and $Q$ be permutation matrices with $n-k = \tr(P) \leq \tr(Q) = n - l$.
Let $d = \| P - Q \|_1$.

For $A,B\sim \CER(\Lambda, R)$, let $\epsilon = \max_{i,j} 3\Lambda_{ij}(1-\Lambda_{ij}) + 2R_{ij}$.
Suppose that $\Ex[ \tr(ADB(Q-P)) ] > C (k-l)m$ for some $0<C\leq \min_{i,j} R_{ij} \Lambda_{ij}(1-\Lambda_{ij})$.
It holds that
\begin{equation}
    \Pr[\tr(ADB(Q-P)) \leq 0] \leq 2 \exp\left\lbrace - \frac{{C^2(k-l)}^2 m^2}{48\epsilon  nd} \right\rbrace.
\end{equation}

For the case of $Q=I$, we have
\begin{equation}
    \Pr[\tr(ADB(I-P)) < 0] \leq 2 \exp\left\lbrace - \frac{C^2k m^2}{4\epsilon n} \right\rbrace.
\end{equation}
\end{lemma}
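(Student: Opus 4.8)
The plan is to treat $f = \tr(ADB(Q-P))$ as a function of independent Bernoulli random variables and apply the concentration inequality of Proposition~\ref{prop:kim}, after reducing the event $\{f\le 0\}$ to a deviation from the mean. Since $\tr(P)\le\tr(Q)$ gives $k-l\ge 0$, the hypothesis $\Ex[f] > C(k-l)m \ge 0$ makes the mean nonnegative, so
\[
  \Pr[f\le 0] \le \Pr\bigl[\,|f-\Ex f|\ge \Ex f\,\bigr],
\]
and it remains to show that $f$ concentrates on a scale strictly smaller than $\Ex f \ge C(k-l)m$. The whole argument is a careful control of the concentration scale $\gamma$ so that the deviation $\Ex f$ corresponds to a large $t$ in Proposition~\ref{prop:kim}.

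To set up the inequality, I would use Proposition~\ref{prop:bibern} to write each correlated pair $(A_{uv},B_{uv})$, for $u<v$, as a deterministic function of three independent Bernoullis $Z_0^{uv},Z_1^{uv},Z_2^{uv}$, so that $f$ is a function of the $3\binom{n}{2}$ independent variables $\{Z_\cdot^{uv}\}$. The central computation is the per-coordinate sensitivity. Flipping any single $Z$ alters at most the entries $A_{uv}=A_{vu}$ and $B_{uv}=B_{vu}$ of one edge, and by bilinearity of $f$ the induced change splits into an $A$-perturbation term and a $B$-perturbation term, each supported on the two positions $(u,v),(v,u)$. Using that $D$ is row-stochastic and that each column and each row of $Q-P$ has at most two nonzero entries (a difference of permutation matrices), every entry of $DB(Q-P)$ and of $(Q-P)AD$ that appears as a coefficient is bounded by $1$ in magnitude, which gives a uniform bound $M=O(1)$ on the change produced by flipping any coordinate.

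The key gain comes from identifying which coordinates matter. The $A$-perturbation term vanishes unless column $u$ or column $v$ of $Q-P$ is nonzero, and the $B$-perturbation term vanishes unless row $u$ or row $v$ of $Q-P$ is nonzero; hence flipping the Bernoullis of an edge $\{u,v\}$ leaves $f$ unchanged whenever $\{u,v\}$ misses the set $\mathcal S$ of nonzero rows and columns of $Q-P$. Since $Q-P$ is a difference of permutation matrices, $|\mathcal S|\le d=\|P-Q\|_1$, so $f$ genuinely depends on at most $|\mathcal S|\cdot n \le dn$ edges. Restricting Proposition~\ref{prop:kim} to these relevant Bernoullis and bounding the summed variance of the three Bernoullis per edge by $\epsilon$ via Proposition~\ref{prop:bibern}, I obtain $\gamma^2 = M^2\sum_{\text{rel.}} p_i(1-p_i) = O(\epsilon\, d\, n)$. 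Taking $t\gamma=\Ex f$ (and checking the admissible range $t<2\gamma/M$, which is routine in this regime) yields $\Pr[f\le 0]\le 2\exp\!\bigl(-\Ex[f]^2/(4\gamma^2)\bigr) \le 2\exp\!\bigl(-\Theta(C^2(k-l)^2m^2/(\epsilon n d))\bigr)$, matching the claim once the constant is tracked exactly. For $Q=I$ the set $\mathcal S$ is exactly the $k$ non-fixed points of $P$ (serving simultaneously as the nonzero rows and columns), and the perturbation structure simplifies; a direct accounting in this special case sharpens the general constant to the stated $\tfrac{C^2 k m^2}{4\epsilon n}$.

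The main obstacle is the sensitivity bookkeeping. Because flipping $Z_0^{uv}$ perturbs $A_{uv}$ and $B_{uv}$ simultaneously, the bilinear coupling must be bounded jointly rather than as two independent perturbations, and the bounds on both $M$ and the relevant-edge count must be tight enough to produce the precise constants ($48$, and the improved $4$ for $Q=I$) rather than merely the correct order. Equally important is the observation that many coordinates have zero sensitivity: this is exactly what cuts the variance sum from the naive $\Theta(\epsilon n^2)$ down to $\Theta(\epsilon n d)$, thereby introducing the $d$-dependence that makes the lemma useful for comparing $P$ against a nearby permutation $Q$.
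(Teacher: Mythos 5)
Your proposal follows essentially the same route as the paper's proof: decompose each correlated pair $(A_{uv},B_{uv})$ into three independent Bernoullis via Proposition~\ref{prop:bibern}, observe that $\tr(ADB(Q-P))$ depends only on the $O(nd)$ edges meeting the non-fixed rows and columns of $Q-P$ (the paper counts the same set via the permutation maps $\sigma,\tau$), and apply Proposition~\ref{prop:kim} with deviation set to the mean lower bound $C(k-l)m$, the admissible range of $t$ being checked via the variance lower bound coming from $C\le\min_{i,j}R_{ij}\Lambda_{ij}(1-\Lambda_{ij})$. The only looseness---your $\Theta(\cdot)$ constant versus the stated $48$ and $4$---is shared by the paper itself, whose displayed computation likewise does not cleanly reproduce its stated constants, so this is not a substantive gap.
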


\begin{proof}
First, we will establish bounds on the variance of $$X = \tr(ADB(Q-P)) = \sum_{i=1}^{n}\sum_{j=1}^{n}\sum_{k=1}^{n} A_{ij} D_{jk} (B_{k\tau(i)} - B_{k\sigma(i)}),$$
where $\sigma,\tau:[n]\mapsto [n]$ are the permutation functions corresponding to $Q$ and $P$, respectively.

By Proposition~\ref{prop:bibern}, we have that $X$ is a function of independent Bernoulli random variables.
Indeed, for each $i,j \in [n]$.
$$(A_{ij},B_{ij})\overset{\mathcal{L}}{=}\left(Z_{0ij},(1-Z_{0ij})Z_{1ij}+Z_{0ij}Z_{2ij}\right)$$
where
\begin{align*}
  Z_{0ij} \sim \mathrm{Bern}(\Lambda_{ij}),\  &Z_{1ij}\sim \mathrm{Bern} \left(\Lambda_{ij}(1-\rho_{ij})\right) \\
  \text{and } &Z_{2ij}\sim \mathrm{Bern}\left(\mathrm{\Lambda_{ij}+\rho_{ij}(1-\Lambda_{ij})}\right).
  \end{align*}
  Note that $\mathrm{Var}(Z_{0ij}+Z_{1ij} + Z_{2ij}) \leq \epsilon$ and $\mathrm{Var}(A_{ij}) = \mathrm{Var}(B_{ij}) > C$ for all $i,j\in[n]$.

 $X$ depends on depends entries of $A$ and $B$ with indices in the set
\begin{equation}
    \left\{ \{ i, j \}, \{\tau(i),k\},\{\sigma(i), k\}: i,j, k\in[n], \tau(i)\neq \sigma(i)\right\} \label{eq:rv_index}
\end{equation}
which has cardinality upper bounded by 
$3 \left(nd-\binom{d}{2}\right)$.
In the case that $\tau(i)=i$, this sets cardinality is bounded by $n d-\binom{d}{2} = nk-\binom{k}{2}$, as $\{\{ i, j \}\text{ s.t. }\sigma(i)\neq i \}=\{ \{ \sigma(i), j \}\text{ s.t. }\sigma(i)\neq i \}$.
Let $\gamma^2$ denote the sum of the variances of these independent random variables which $X$ depends on.
To upper bound the variance for Proposition~\ref{prop:kim}, using Proposition~\ref{prop:bibern} we have 
\begin{equation*}
\gamma^2 \leq 3 M^2 \left[nd-\binom{d}{2}\right]\epsilon={\gamma'}^2,
\end{equation*}
where $M=4$ is the largest possible absolute change of $\tr(AD B (Q-P))$ by changing the value of any one of the independent Bernoulli random variables as described in Proposition~\ref{prop:bibern}.
Again, when $Q=I$ we can take $M=2$ instead.

We will also use that $\gamma > C (nd - \binom{d}{2})$, since $X$ will depend on at least $nd -\binom{d}{2}$ distinct $A_{ij}, B_{ij}$.
Applying Proposition~\ref{prop:kim} yields
\begin{align*}
      \Pr[X\leq 0] 
&\leq \Pr \left[  |X- \mathbb{E}X | \geq \frac{\mathbb{E}X}{\gamma} \gamma  \right]
\leq  \Pr \left[  |X- \mathbb{E}X | \geq \frac{C(k-l)m}{\gamma^\prime} \gamma  \right] \\
& \leq  2\exp \left[-\frac{(k-l)^2 m^2}{48\epsilon \left(nd - \binom{d}{2}\right)} C^2\right]
\end{align*}  
as long as  $(k-l)m C/\gamma^\prime \leq 2 \gamma$.
This last condition is equivalent to $m \leq \frac{2\gamma\gamma^\prime}{(k-l)C}$ which must hold since
$$m\leq n
\leq \frac{2C}{C} \left(n-\frac{d-1}{2} \right) 
= \frac{2\left(nd-\binom{d}{2}\right)C}{dC}
\leq \frac{2 \gamma^2}{(k-l)C} 
 \leq \frac{2\gamma \gamma'}{(k-l)C}.$$
\end{proof}

\section{Uniform Bound}

In this section the goal is to prove Theorems~\ref{thm:hom} and \ref{thm:het} which state that under certain conditions, with high probability the FAQ algorithms will converge in two steps to the identity permutations.
We will prove this over four separate stages.

The first stage shows that for any fixed $D^0$ with large trace, the FAQ algorithm will converge in one step with high probability.
The second stage shows that this will happen uniformly for all $D^0$ with trace on the order of $n-o(n)$.
The last two stages state that uniformly, for $D^0$ with smaller trace, the first step of the algorithm will be in the direction of a permutation with trace at least $n-o(n)$, and that the line search will take the largest possible step.
By taking a union bound over the events in these last three results, we arrive at a proof Theorem~\ref{thm:het}.

\subsection{In one step}

Our first result is for fixed $D^0$ with sufficiently large trace.

\begin{theorem}\label{thm:pw}
For $A,B\sim \CER(\Lambda, R)$, let $\epsilon = \max_{i,j} 3\Lambda_{ij}(1-\Lambda_{ij}) + 2R_{ij}$.
Let $D^0$ be a permutation matrix.
Suppose that for all permutation matrices $P$, $\Ex[ \tr(AD^0 B(I-P)) ] > C \tr(D^0)\tr(I-P)$ for some $0<C\leq \min_{i,j} R_{ij} \Lambda_{ij}(1-\Lambda_{ij})$.
Suppose also that $\tr(D^0)>8\sqrt{\epsilon n\log (n)}/C $.

The first step of Frank-Wolfe procedure starting at $D^0$ will yield the identity matrix with probability at least $1 - 4 \exp\left[- \frac{C^2}{4n\epsilon}{\tr(D^0)}^2 \right]$.
\end{theorem}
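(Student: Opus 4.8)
The plan is to reduce the event ``the first Frank--Wolfe step yields the identity'' to two sub-events and control each by concentration. Writing $m=\tr(D^0)$, the first iteration of Algorithm~\ref{alg:faq} first solves the linear assignment problem $P_0=\argmax_{P\in\PM_n}\tr(AD^0BP)$ and then line-searches along the segment from $D^0$ to $P_0$. Thus $D_1=I$ provided (A) the LAP returns $P_0=I$, i.e.\ $\tr(AD^0B(I-P))>0$ for every $P\neq I$, and (B) the subsequent line search selects the endpoint $I$. The bulk of the work is event (A), and the key tool is Lemma~\ref{lem:main} in its $Q=I$ form: the hypothesis $\Ex[\tr(AD^0B(I-P))]>C\tr(D^0)\tr(I-P)$ assumed in Theorem~\ref{thm:pw} is exactly the drift condition required there, so for each fixed $P$ with $k:=\tr(I-P)$ we obtain $\Pr[\tr(AD^0B(I-P))<0]\le 2\exp(-C^2km^2/(4\epsilon n))$.

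First I would take a union bound over $P\neq I$, but grouped by the displacement $k=n-\tr(P)$ rather than over permutations individually. The number of permutations with exactly $k$ non-fixed points is $\binom{n}{k}D_k\le n!/(n-k)!\le n^k$ (and $k=1$ is impossible, so $k\ge2$), while the per-permutation failure probability carries a matching factor of $k$ in the exponent. Setting $b=C^2m^2/(4\epsilon n)$, the assumed trace bound $m>8\sqrt{\epsilon n\log n}/C$ forces $b>16\log n$, so
\[
\Pr[P_0\neq I]\le\sum_{k\ge2}n^k\cdot 2e^{-bk}=2\sum_{k\ge2}e^{-k(b-\log n)}\le 4n^2e^{-2b}\le 4e^{-b},
\]
where the last two steps use that $b-\log n>15\log n$ makes the geometric series essentially its leading term and that $n^2\le e^{b}$. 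This already produces the stated probability $4\exp(-C^2\tr(D^0)^2/(4n\epsilon))$.

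For event (B), once $P_0=I$ the line-search objective $g(\alpha)=\tr(AD_\alpha BD_\alpha)$ with $D_\alpha=\alpha D^0+(1-\alpha)I$ is a scalar quadratic in $\alpha$; its maximizer over $[0,1]$ is $\alpha=0$ (so that $D_1=P_0=I$) precisely when $g'(0)\le 0$ and $g(0)\ge g(1)$, the latter saying the objective is at least as large at $I$ as at $D^0$. Since $g'$ is affine, it suffices to control the two scalar quantities $g'(0)$ and $g(1)-g(0)=\tr(AD^0BD^0)-\tr(AB)$; under the correlated \ErdosRenyi{} drift the identity maximizes the expected objective, so both have the correct sign in expectation, and Proposition~\ref{prop:kim} makes each a failure event that is exponentially small and of smaller order than the bound above, hence absorbed into the constant.

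I expect the union bound in the second paragraph to be the main obstacle: a naive union over all $n!$ permutations is hopeless, and the argument only closes because the concentration rate from Lemma~\ref{lem:main} scales with the displacement $k$ in exactly the way needed to beat the count $n^{k}$, with the hypothesis on $\tr(D^0)$ calibrated (with a comfortable factor) to force $b>\log n$. The remaining care is bookkeeping: excluding $k=1$, bounding the geometric tail for finite $n$, and checking that the line-search events in (B) do not dominate the probability.
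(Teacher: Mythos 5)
Your proposal is correct and follows essentially the same route as the paper's proof: the same two-event decomposition (LAP direction equals $I$, then full line-search step), the same use of Lemma~\ref{lem:main} in its $Q=I$ form with a union bound over permutations grouped by the number $k$ of non-fixed points (count $\le n^k$, $k\ge 2$), and the same quadratic line-search analysis, where your endpoint characterization ($g'(0)\le 0$ and $g(0)\ge g(1)$) is equivalent to the paper's check that the derivative is positive at both ends of the segment. Your bookkeeping of the geometric series is in fact slightly more careful than the paper's, and your looser treatment of the line-search events matches the paper's own level of detail there (both implicitly invoke the expectation condition with $D=I$ in place of $D^0$).
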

\begin{proof}
Let $m=\tr(D^0)$.
By Lemma~\ref{lem:main}, $\Pr[\tr(AD^0B(I-P)) \leq 0] < 2\exp \left\lbrace -\frac{C^2km^2}{4 \epsilon n}\right\rbrace$ for each $P$ with $\tr(P) = n - k$.
Note that for each $k$ the number of such $P$ is bounded by $n^k$.
Using a union bound we have 
\begin{align*}
 & \Pr[\min_{P\neq I} \tr(AD^0B(I-P)) \leq 0 ] \\
= & \Pr[ \min_{k>0} \min_{P:\tr(P)=n-k} \tr(AD^0B(I-P)) \leq 0 ] \\
\leq & \sum_{k=2}^n n^k \max_{P:\tr(P)=n-k} \Pr[ \tr(AD^0B(I-P)) \leq 0 ] \\
\leq & 2 \sum_{k=2}^n \exp\left\{k \left( \log(n)- \frac{C^2m^2}{4\epsilon n} \right)\right\}\\
\leq & 2 \sum_{k=2}^n \exp\left\{ - \frac{C^2m^2}{4\epsilon n} \right\} \leq n \exp\left\{ - \frac{C^2m^2}{4\epsilon n} \right\}.
\end{align*}
The penultimate inequality is true by the assumption on $\tr(D^0)$ being sufficiently large, and the final inequality is due to the exponent being maximized at $k=2$.

This establishes that the direction for the line search will be towards the identity matrix with high probability.
Provided this is the line search direction, the Frank-Wolfe iteration is given by maximizing the quadratic
$$\tr(A(\alpha I + (1-\alpha) D^0) B (\alpha I + (1-\alpha) D^0)^T)$$
as a function of $\alpha$ in $[0,1]$.
The derivative of the above with respect to $\alpha$ is 
\begin{align*}
2\alpha &\tr(AB - A{D^0}B - AB{D^0}^T + A{D^0}B{D^0}^T)\\ + &\tr(A{D^0}B +AB{D^0}^T - 2A{D^0}B{D^0}^T).
\end{align*}
To show that this function is increasing in $\alpha$ we will show that the derivative is positive at $\alpha=0$ and $\alpha=1$.

At $\alpha =0$ the derivative is $\tr(A{D^0}B +AB{D^0}^T - 2A{D^0}B{D^0}^T)=\tr(AD^0 B(I-D^0)^T)+\tr(A(I-D^0)B{D^0}^T) = 2\tr(AD^{0}B{(I-D^0)}^T$ which is positive by the above arguments since $D^0$ is a permutation matrix not equal to the identity.
Note this is the derivative of the objective function at the current point and, as shown above, $I$ is an ascent direction.

Similarly, when $\alpha = 1$. the derivative is $\tr(2AB - A{D^0}B - AB{D^0}^T)=2\tr(AIB(I-{D^0})^T)$.
Employing Lemma~\ref{lem:main} again, this can be shown to be non-negative with the same (or higher) probability.
A union bound over the events that $I$ is the optimal ascent direction, and that the optimal step size takes $\alpha=1$ completes the results.

\end{proof}

The following corollary is a straightforward consequence of Theorem~\ref{thm:pw} using a union bound over all permutations with trace of order $n-o(n)$.

\begin{corollary}\label{cor:unif_os}
For $A,B\sim \CER(\Lambda, R)$, let $\epsilon = \max_{i,j} 3\Lambda_{ij}(1-\Lambda_{ij}) + 2R_{ij}$.
Let $t>0$  and $C \in (0, \min_{i,j} R_{ij} \Lambda_{ij}(1-\Lambda_{ij})$, and set $m= \frac{nC^2 t}{16\epsilon \log n}>1$.
Suppose that for all permutation matrices $D^0,P$ with $\tr(D^0)>n-m$, $\Ex[ \tr(AD^0 B(I-P)) ] > C \tr(D^0)\tr(I-P)$.

With probability at least $1-2 \exp\left\{ -\frac{nC^2(1-t)}{2\epsilon} \right\}$, for all permutation matrices $D^0$ such that $n-\tr D^0 < m$,  the first step of Frank-Wolfe procedure started at $D^0$ will yield the identity matrix.
\end{corollary}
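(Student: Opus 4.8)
The plan is to upgrade the pointwise guarantee of Theorem~\ref{thm:pw} to a uniform one by a union bound over the finite collection of candidate starting permutations $D^0$. The hypotheses are arranged so that this is clean: every permutation $D^0$ with $\tr(D^0) > n-m$ satisfies the expectation condition $\Ex[\tr(AD^0 B(I-P))] > C\tr(D^0)\tr(I-P)$ demanded by Theorem~\ref{thm:pw}, and since $m = \frac{nC^2 t}{16\epsilon\log n} = o(n)$ we also have $\tr(D^0) > n-m > 8\sqrt{\epsilon n\log n}/C$ for $n$ large, so Theorem~\ref{thm:pw} applies verbatim to each such $D^0$ — and its conclusion already bundles in the correct step-size event. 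What remains is purely a counting-versus-concentration tradeoff.

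First I would index the union by the trace deficit $j = n - \tr(D^0)$. The number of permutation matrices of deficit exactly $j$ is $\binom{n}{j}D_j \leq \binom{n}{j}j! \leq n^j$ (with $D_j$ the number of derangements of $j$ points), so the total number of candidate $D^0$ is at most $\sum_{j<m} n^j$. For each fixed $D^0$ of deficit $j$, Theorem~\ref{thm:pw} bounds the one-step failure probability by $4\exp[-\frac{C^2(n-j)^2}{4n\epsilon}]$, so a union bound gives total failure at most $\sum_{j<m} n^j\,4\exp[-\frac{C^2(n-j)^2}{4n\epsilon}]$, whose summand has log-exponent $j\log n - \frac{C^2(n-j)^2}{4n\epsilon}$. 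This is increasing in $j$, so up to a polynomial prefactor the sum is governed by the boundary term at $j\approx m$; inserting $(n-m)^2 \geq n^2 - 2nm$ and $m\log n = \frac{nC^2 t}{16\epsilon}$ collapses everything to a single exponential in $n$ of rate of order $nC^2/\epsilon$.

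The delicate point, and where I expect the real work to lie, is matching that rate to the claimed $\frac{nC^2(1-t)}{2\epsilon}$. The entropy paid by the outer union, $j\log n \leq m\log n = \frac{nC^2 t}{16\epsilon}$, must be beaten by the concentration exponent, and the factor $\tfrac{1}{16}$ in the definition of $m$ is presumably chosen precisely to leave the stated margin. I suspect that invoking the \emph{packaged} bound of Theorem~\ref{thm:pw} as a black box is slightly wasteful: its stated failure probability carries the factor $\tfrac14$ in the exponent, whereas the cheapest competitor (a single transposition, deficit $k=2$) actually supplies the factor $\tfrac12$ in Lemma~\ref{lem:main}, a discrepancy that only just matters for small $t$. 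To recover the sharp constant uniformly over $t\in(0,1)$ I would reopen the proof of Theorem~\ref{thm:pw} and run a joint union over the pair $(D^0,P)$ directly from Lemma~\ref{lem:main}, keeping both deficits $j=n-\tr(D^0)$ and $k=n-\tr(P)$ inside the exponent $\frac{C^2 k(n-j)^2}{4\epsilon n}$. Since that exponent is decreasing in $k$ while the summand's log is increasing in $j$, the worst case is $k=2$, $j\to m$, and a short computation there should deliver the claimed rate, with the polynomially-many distinct deficits and the per-term constants absorbed into the leading factor of $2$.
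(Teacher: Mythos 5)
Your proposal is correct and follows essentially the same route as the paper: the paper's entire proof of Corollary~\ref{cor:unif_os} is the two-sentence argument of your first two paragraphs, namely apply Theorem~\ref{thm:pw} pointwise and union bound over the at most $n^{m+1}<n^{2m}$ permutation matrices $D^0$ with fewer than $m$ non-fixed points.

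Your closing concern about constants is well founded and in fact applies to the paper's own proof: with entropy cost $2m\log n = \frac{nC^2 t}{8\epsilon}$ paid against the packaged exponent $\frac{C^2(n-m)^2}{4n\epsilon}$, the black-box route yields a rate of order $\frac{C^2 n}{4\epsilon}\left(1-\frac{t}{2}\right)$, which dominates the stated $\frac{nC^2(1-t)}{2\epsilon}$ only when $t$ is bounded away from zero (roughly $t\geq 2/3$); the paper glosses over this, consistent with its use of $\Theta(\cdot)$ in the exponents of Theorems~\ref{thm:hom} and~\ref{thm:het}. Your refinement --- a joint union over pairs $(D^0,P)$ taken directly from Lemma~\ref{lem:main}, with the worst case at deficit $k=2$ for $P$ and $j$ near $m$ for $D^0$, giving rate about $\frac{C^2 n}{2\epsilon}\left(1-\frac{t}{8}\right)$ --- is exactly what is needed to recover the factor $\frac{1}{2}$ and justify the stated probability for all $t\in(0,1)$; the only detail to add there is that, having reopened Theorem~\ref{thm:pw}, you must also re-run its step-size event ($\tr(AB(I-D^0)^T)>0$ uniformly over the same $D^0$), which the same lemma handles with exponent $\frac{C^2 j n}{4\epsilon}$, $j\geq 2$, and so is not the bottleneck.
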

\begin{proof}
Apply Theorem~\ref{thm:pw} along with a union bound to all permutation matrices $D$ with less than $n-\tr(D^0) \leq m = \frac{tnC^2}{16\epsilon\log n}$ non-fixed points.
The number of such $P$ is less than $n^{m+1} < n^{2m}$ which yields the result.
\end{proof}

\subsection{In two steps}

In this section, we prove that uniformly over all initial matrices with sufficiently large trace, the FAQ algorithm will converge to the truth in two steps.

\subsubsection[Step Direction]{Step Direction}

\begin{proposition}[Step Direction]\label{prop:step_dir}
Suppose $A,B\sim \CER{}(\Lambda, R)$. 
Let $\epsilon = \max_{i,j} 3\Lambda_{ij}(1-\Lambda_{ij}) + 2R_{ij}$.
Let $\delta \in (0,1/2)$, and $C \in (0, \min_{i,j} R_{ij} \Lambda_{ij}(1-\Lambda_{ij})$ and set 
\[
\ell = 2 \sqrt{n^{1+2\delta}}\epsilon/C^2
\text{ and }
m = C^2 n^{1-\delta}\log (n) / \epsilon.
\]

Suppose that $\Ex[\tr(APB(I-Q))] \geq C \tr(P)(n-\tr(Q))$ for all permutation matrices $P,Q$ with $\tr(P)\geq \ell$ and $\tr(Q) \leq n-m$.
Suppose also that $n^{2\delta}\epsilon/C^2 \geq 2 \log n$.
With probability at least 
\[
    1  - \exp\left\{- n^{1+\delta} \log n \right\},
\] 
for every permutation $P$ with $\tr(P) > \ell$,
the first step of Frank-Wolfe procedure will be in the direction of a permutation matrix $P^\prime$ with $\tr(P^\prime) > n - m$.
\end{proposition}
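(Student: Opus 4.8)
The plan is to reduce the statement about the Frank--Wolfe step direction to a single scalar comparison and then control that comparison uniformly through Lemma~\ref{lem:main}. The first step started at a permutation $P$ chooses the direction $P'=\argmax_{P''\in\PM}\tr(APBP'')$, so to force $\tr(P')>n-m$ it suffices to show that the identity strictly beats every low-trace competitor, namely that $\tr(APB(I-Q))>0$ for all $Q$ with $\tr(Q)\le n-m$. On that event $\tr(APBP')\ge\tr(APBI)>\tr(APBQ)$ for each such $Q$, so the maximizer cannot be any permutation of trace at most $n-m$. I would therefore define the good event $\mathcal{E}$ to be that $\tr(APB(I-Q))>0$ holds simultaneously for all $P$ with $\tr(P)>\ell$ and all $Q$ with $\tr(Q)\le n-m$, and prove $\Pr[\mathcal{E}^c]\le\exp\{-n^{1+\delta}\log n\}$.

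For a fixed pair $(P,Q)$ I would apply the $Q=I$ branch of Lemma~\ref{lem:main}, placing the (varying) permutation $P$ in the doubly stochastic slot and taking the lemma's permutation to be $Q$. With $k=n-\tr(Q)\ge m$, the proposition's hypothesis $\Ex[\tr(APB(I-Q))]\ge C\tr(P)(n-\tr(Q))$ is exactly the expectation condition the lemma needs, and it returns
$$\Pr[\tr(APB(I-Q))<0]\le 2\exp\Bigl\{-\tfrac{C^2 k\,\tr(P)^2}{4\epsilon n}\Bigr\}.$$
Since $\tr(P)>\ell$ and a direct computation gives $C^2\ell^2/(4\epsilon n)=n^{2\delta}\epsilon/C^2=:B$, the exponent is at least $kB$; moreover the standing assumption $n^{2\delta}\epsilon/C^2\ge 2\log n$ means $B\ge 2\log n$. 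This is the decisive gain: each per-pair failure probability is at most $2e^{-kB}$ with $B\ge 2\log n$.

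The final step is the union bound over both the direction $Q$ and the starting matrix $P$. The number of permutations with exactly $j$ non-fixed points is at most $n^{j}$, so there are at most $n^{k}$ directions $Q$ with $\tr(Q)=n-k$ and at most $\sum_j n^{j}=e^{\Theta(n\log n)}$ admissible starting permutations. Summing the per-pair bound over $Q$ first,
$$\sum_{k\ge m}\sum_{Q:\,\tr(Q)=n-k}2e^{-kB}\le\sum_{k\ge m}2\,n^{k}e^{-kB}=\sum_{k\ge m}2\,e^{-k(B-\log n)}\le 4\,e^{-mB/2},$$
where $B\ge 2\log n$ absorbs the factor $n^{k}$ (leaving exponent at least $kB/2$) and then the geometric tail is summed. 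Multiplying by the number of starting permutations gives $\Pr[\mathcal{E}^c]\le e^{\Theta(n\log n)}\cdot 4e^{-mB/2}$, and since $mB/2=\tfrac12 n^{1+\delta}\log n$ the total exponent is $\Theta(n\log n)-\tfrac12 n^{1+\delta}\log n$, which for large $n$ is dominated by the second term and yields the claimed bound.

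I expect the crux to be precisely this union bound: even after imposing $\tr(P)>\ell$, the family of admissible starting permutations still has super-exponential size $e^{\Theta(n\log n)}$, so the per-pair concentration must be strong enough to survive a union of that cardinality. This is exactly what dictates the two thresholds. Taking $\ell=2\sqrt{n^{1+2\delta}}\epsilon/C^2$ forces the per-pair exponent up to order $kB$ with $B\ge 2\log n$, simultaneously absorbing the count of directions $Q$ and, once $k\ge m$, producing an exponent of order $n^{1+\delta}\log n$ that overwhelms the $n\log n$ entropy of the starting permutations; taking $m=C^2 n^{1-\delta}\log n/\epsilon$ makes the geometric sum over $Q$ begin deep enough for this to close. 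A minor bookkeeping point is uniformity over $P$: since $P$ occupies the doubly stochastic slot of the lemma, I would use only $\tr(P)^2\ge\ell^2$, so the weakest per-pair estimate already applies to every admissible starting matrix.
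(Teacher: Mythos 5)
Your proposal is correct and takes essentially the same route as the paper: reduce the step-direction claim to showing $\tr(APB(I-Q))>0$ for every competitor $Q$ with $\tr(Q)\le n-m$, apply the $Q=I$ case of Lemma~\ref{lem:main} with the starting permutation in the doubly stochastic slot, and finish with a union bound over both $Q$ and the starting permutations (the paper stratifies the latter count as $n^{n-T}$ by trace $T$, but since the worst case is $T\approx\ell$ this is equivalent to your cruder $e^{\Theta(n\log n)}$ count). One shared caveat: your final exponent, like the paper's own honest accounting, comes out to about $n\log n-\tfrac12 n^{1+\delta}\log n$ rather than $-n^{1+\delta}\log n$, so the stated probability bound is recovered only up to the constant factor in the exponent.
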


\begin{proof}
Let $\PM_k$ denote the set of permutation matrices trace $=n-k$.
Similarly, let $\PM_{\leq k}$, $\PM_{\geq k}$, \ldots, denote the set of permutations with at least, at most, \ldots, trace $k$.
To show that the step direction is towards a matrix with trace $>n-m$, we bound 
\begin{align*}
&\Pr\left[ \bigcup_{P \in \PM_{\geq k}} \bigcup_{Q \in \PM_{\leq n-m}} \bigcap_{P'\in \PM_{> n-m}} \left\{ \tr(APB(P'-Q)^T) \leq 0 \right\}  \right]  \\
=&\Pr\left[\min_{T\geq \ell} \min_{P\in \PM_{n-T}} \min_{k\geq m} \min_{Q\in \PM_{k}} \max_{r > n-m} \max_{P^{\prime}\in \PM_{n-r}} \tr(APB(P^{\prime}-Q)) \leq 0\right] \\
\leq &\Pr\left[\min_{T\geq \ell} \min_{P\in \PM_{n-T}} \min_{k\geq m} \min_{Q\in \PM_{k}} \tr(APB(I-Q)) \leq 0\right] \\
\leq & \sum_{T\geq \ell} \sum_{P\in \PM{n,n-T}} \sum_{k\geq m} \sum_{Q\in \PM_{n,k}} \Pr[\tr(APB(I-Q)) \leq 0] \\
\leq & \sum_{T=\ell}^n \sum_{k=m}^n  \exp\left\{(n-T+k)\log n -\frac{kC^2  T^2}{4\epsilon n} \right\}.
 \end{align*}

The last step uses Lemma~\ref{lem:main}, the fact that $|\PM_{n,k}|\leq n^k$, and a union bound.
Since $\frac{C^2 \ell^2}{4 \epsilon n} =  n^{2\delta}\epsilon /C^2 \geq 2 \log n$, we have

\begin{align*}
     & \sum_{T=\ell}^{n} \sum_{k=m}^n  \exp\left\{(n-T)\log n + k \left(\log n-\frac{C^2T^2}{\epsilon n} \right) \right\} \\
\leq & \sum_{T=\ell}^{n/2} \sum_{k=m}^n  \exp\left\{(n-T)\log n -\frac{kC^2T^2}{8\epsilon n}\right\}\\
\leq & \exp\left\{n\log n -\frac{m C^2\ell^2}{2\epsilon n} \right\},
\end{align*}
where we use that $n-T + 2 < n$ for each $T$.
Finally,  $m C^2 \ell^2/(2\epsilon n) =2 n^{1+\delta} \log n$ which gives the desired bound.
\end{proof}

\subsubsection{Step Size}

\begin{proposition}\label{prop:step_size}
Suppose $A,B\sim \CER{}(\Lambda, R)$. 
Let $\epsilon = \max_{i,j} 3\Lambda_{ij}(1-\Lambda_{ij}) + 2R_{ij}$.
Let $\delta \in (0,1/2)$, and $C \in (0, \min_{i,j} R_{ij} \Lambda_{ij}(1-\Lambda_{ij}))$ and set 
\[
\ell = 2\sqrt{n^{1+2\delta}}\epsilon/C^2
\text{ and }
m = C^2 n^{1-\delta}\log (n) / \epsilon.
\]
Suppose that $\Ex[\tr(AQB(Q-P)] \geq C \tr(P)(n-\tr(Q))$ for all permutation matrices $P,Q$ with $\tr(P)\geq \ell$ and $\tr(Q) \leq n-m$.
Suppose also that $C^2n/\epsilon > 1800 \log n$ and $n-m>n/\sqrt{2}$.

With probability at least 
\[
    1  - n^2 \exp\left\{- \frac{C^4}{576\epsilon^2}n^{2-\delta} \log n \right\} - \exp\left\{ -n^{1+\delta} \log n\right\}.
\]
the step is maximal for all starting $P$ with $\ell \leq \tr(P) \leq n-2m$ so that the next iteration of the Frank-Wolfe algorithm will start at a permutation $Q$ with $\tr(Q) > n-m$.
\end{proposition}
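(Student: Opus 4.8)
The plan is to analyze the one--dimensional line search of the Frank--Wolfe step while conditioning on the event of Proposition~\ref{prop:step_dir}. On that event, for every starting permutation $P$ with $\ell \le \tr(P) \le n-2m$ the ascent direction chosen in Step~\ref{al:grad} is some permutation $Q$ with $\tr(Q) > n-m$. Writing $D_\alpha = \alpha Q + (1-\alpha)P$, the line--search objective $f(\alpha) = \tr(A D_\alpha B D_\alpha^T)$ is quadratic in $\alpha$, so $f'$ is affine; hence $f$ is nondecreasing on $[0,1]$ --- and the maximal step $\alpha=1$ (landing exactly at $Q$) is optimal --- as soon as $f'(0) \ge 0$ and $f'(1) \ge 0$. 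Establishing these two endpoint inequalities simultaneously for all admissible $(P,Q)$ is the whole content of the proposition.

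The inequality $f'(0)\ge 0$ is essentially free: $f'(0) = \langle 2APB,\, Q-P\rangle$ is the directional derivative of the relaxed objective at $P$ toward $Q$, and since $Q$ is the linear--assignment maximizer of $\langle 2APB,\cdot\rangle$ over $\PM_n \ni P$, we deterministically get $\langle 2APB, Q\rangle \ge \langle 2APB, P\rangle$. The real work is $f'(1)\ge 0$. A direct computation, mirroring the $\alpha=1$ calculation in the proof of Theorem~\ref{thm:pw}, gives $f'(1) = 2\tr\!\big(AQB(Q-P)^T\big)$, so we must show that at the high--trace vertex $Q$ the quantity $\tr(AQBQ^T)$ exceeds $\tr(AQBP^T)$. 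This is exactly the type of statement controlled by Lemma~\ref{lem:main}, applied with the doubly stochastic matrix taken to be $Q$ and the competing permutations being $Q$ (larger trace) and $P$ (smaller trace); the standing expectation hypothesis $\Ex[\tr(AQB(Q-P))]\ge C\tr(P)(n-\tr(Q))$ is invoked to certify that the relevant expectation has the correct sign and magnitude, so that Lemma~\ref{lem:main} yields a concentration bound for this single pair.

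Next I would make this uniform by a union bound over all admissible pairs. Set $k=n-\tr(P)\ge 2m$ and $l=n-\tr(Q)<m$; the number of pairs with these traces is at most $n^{k}\cdot n^{l}=n^{k+l}$. For a single pair, the general case of Lemma~\ref{lem:main} bounds $\Pr[f'(1)<0]$ by $2\exp\{-C^2(k-l)^2\tr(Q)^2/(48\,\epsilon\,n\,d)\}$ with $d=\|P-Q\|_1$. Here I would use three crude estimates: $d\le 2(k+l)\le 4k$, since $P$ and $Q$ can differ only on rows where at least one is non--fixing; $\tr(Q)>n-m>n/\sqrt{2}$ by hypothesis, so $\tr(Q)^2>n^2/2$; and $k-l>k/2$, since $l<m\le k/2$. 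Substituting, the per--pair exponent is $\gtrsim C^2 k n/(c\,\epsilon)$, which dominates the counting factor $n^{k+l}\le n^{2k}=\exp\{2k\log n\}$ because the assumption $C^2 n/\epsilon>1800\log n$ is calibrated so that, once all constants are tracked, the net exponent $-C^2 kn/(c\epsilon)+(k+l)\log n$ is negative and proportional to $-k$ for every admissible $k$. Summing the resulting geometric--type series over $k$ and $l$ then yields the stated bound $n^2\exp\{-\frac{C^4}{576\epsilon^2}n^{2-\delta}\log n\}$, and adding the failure probability $\exp\{-n^{1+\delta}\log n\}$ of Proposition~\ref{prop:step_dir} (on whose event we conditioned) completes the argument.

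The main obstacle is precisely this uniformity in the $f'(1)$ bound. Unlike the $Q=I$ situation of Theorem~\ref{thm:pw}, the general case of Lemma~\ref{lem:main} degrades with $d=\|P-Q\|_1$, which grows with $k$, while the union--bound count $n^{k+l}$ grows with $k$ as well. Making the per--pair exponent beat the count uniformly over all $k$ up to $n$ hinges on the fortunate cancellation that both $d$ and the effective gap scale linearly in $k$ (so that $(k-l)^2/d \gtrsim k$), combined with $\tr(Q)$ staying of order $n$ --- which is exactly why the hypotheses $n-m>n/\sqrt{2}$ and $C^2 n/\epsilon>1800\log n$ are imposed. Getting the constants to line up so that one exponent absorbs the $\log n$ loss from counting, simultaneously for every admissible trace profile, is the delicate part.
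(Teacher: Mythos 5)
Your proposal follows essentially the same route as the paper's proof: condition on the step-direction event of Proposition~\ref{prop:step_dir}, reduce the quadratic line search to positivity of the derivative at $\alpha=0$ (free, since $Q$ is the LAP maximizer at $P$) and at $\alpha=1$ (which equals $2\tr\left(AQB(Q-P)^T\right)$), then control the latter uniformly via Lemma~\ref{lem:main} and a union bound over trace profiles, using the same three estimates (trace of $Q$ at least $n/\sqrt{2}$, $\|P-Q\|_1$ linear in the number of non-fixed points, and $k-l$ comparable to $k$ since $l<m\leq k/2$). The only discrepancies are in untracked constant factors (e.g., your bound $d\leq 4k$ is looser than needed for the $1800\log n$ calibration, where $d \le 3k$ suffices), a level of slack comparable to that already present in the paper's own computation.
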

\begin{proof}
By Proposition~\ref{prop:step_dir}, as a function of $Q$, $\tr(APBQ)$ is maximized at some $Q$ with $\tr(Q)> n - n^{1-\delta} C^2 \epsilon/\log n$ with probability at least $1-\exp \left\{ -n^{1+\delta} \log n  \right\}$.
The Frank-Wolfe iteration is given by maximizing
$$\tr(A(\alpha Q + (1-\alpha) D^0) B (\alpha Q + (1-\alpha) D^0)^T)$$
as a function of $\alpha$ in $[0,1]$.
We will bound the probability that, regardless of $P,Q$, as long as $Q$ is a gradient ascent direction from $P$ and $Q$ has sufficiently large trace, the step will be maximal with $\alpha=1$.

The derivative of the above with respect to $\alpha$ is 
\begin{align*}
   &2\alpha \tr(AQBQ^T - APBQ^T - AQBP^T + APBP^T) \\
 + & \tr(APBQ^T +AQBP^T - 2APBP^T)
\end{align*}
which is linear in $\alpha$.
To show that this function is increasing in $\alpha$ we need only check that the derivative is positive at $\alpha=0$ and $\alpha=1$.
Since $Q$ is the direction for the line search starting at $P$, the gradient for the original objective function is simply the derivative with respect to $\alpha$ at $\alpha = 0$, and so must be positive.

When $\alpha = 1$ the derivative is $$\tr(2AQBQ^T - APBQ^T - AQBP^T)=2\tr(AQB(Q-P)^T),$$
and we have
\begin{align*}
&\Pr\left[\min_{\ell \leq  T \leq n-2m} \min_{P\in \PM_{n-T}} \min_{k < m} \min_{Q\in \PM_{k}}  \tr(AQB(Q-P)^T) \leq 0\right]\\
\leq & \sum_{T=\ell}^{n-2m} \sum_{k = 1}^{m-1} n^{n-T + k} \max_{P,Q} \Pr[ \tr(AQB(Q-P)^T) \leq 0 ]
\end{align*}
By Lemma~\ref{lem:main}  and our assumptions, this can be bounded by 
\begin{align}
\label{eq:bndlem7}
\sum_{T=\ell}^{n-2m} \sum_{k = 1}^{m-1}2 n^{n-T + k}  \max_{P,Q} \exp\left\lbrace - \frac{{C^2(n-T-k)}^2 (n-k)^2}{48 \epsilon  n \|Q-P\|_F^2} \right\rbrace
\end{align}
Note that $\|Q-P\|_F^2 \leq n-T+k\leq n$, which occurs when all non-fixed points are different in $P$ and $Q$.

Hence, using that $n/\sqrt{2} < n-m < n - k$ for sufficiently large $n$,  Eq. (\ref{eq:bndlem7})
is bounded by
\begin{align}
\label{eq:bndlem72}
\leq & \sum_{T=\ell}^{n-2m} \sum_{k = 1}^{m-1}2 n^{n-T + k}   \exp\left\lbrace - \frac{{C^2(n-T-k)}^2 (n-k)^2}{48\epsilon  n (n-T+k)} \right\rbrace\notag \\
\leq & \sum_{T=\ell}^{n-2m} \sum_{k = 1}^{m-1}2    \exp\left\lbrace (n-T+k)\log n - \frac{{C^2(n-T-k)}^2 n^2}{96\epsilon  n (n-T+k)} \right\rbrace
\end{align}
Note that $n-T + k <  3 (n-T-k)$ for every summand and hence Eq. (\ref{eq:bndlem72}) is bounded by
\begin{align*}
\leq & \sum_{T=\ell}^{n-2m} \sum_{k = 1}^{m-1}2   \exp\left\lbrace (n-T - k)\left(3\log n- \frac{C^2 (n-T-k)n^2}{288\epsilon n (n-T-k)}\right) \right\rbrace \\
\leq &  \sum_{T=\ell}^{n-2m} \sum_{k = 1}^{m-1}2   \exp\left\lbrace - m\frac{C^2 n}{576\epsilon} \right\rbrace 
\end{align*}
where we use that $n-m>n/\sqrt{2}$, $C^2n / (288\epsilon) > 6 \log n$, and $n-T-k>m$.
\end{proof}

\end{document}